\title{Equational characterization for two-valued states in orthomodular quantum systems}
\author{{\sc G. Domenech}\thanks{%
Fellow of the Consejo Nacional de Investigaciones Cient\'{\i}ficas y
T\'ecnicas (CONICET) }\ \ $^{1,4}$, \  {\sc H. Freytes}$^{*}$
$^{2,3}$ \ and \ {\sc C. de Ronde}\ $^{4,5}$}
\date{}
\begin{document}

\bibliographystyle{plain}

\maketitle

\begin{center}

\begin{small}
1. Instituto de Astronom\'{\i}a y F\'{\i}sica del Espacio (IAFE)\\
Casilla de Correo 67, Sucursal 28, 1428 Buenos Aires - Argentina\\
2. Universita degli Studi di Cagliari, Via Is Mirrionis 1, 09123,
Cagliari - Italia \\3. Instituto Argentino de Matem\'atica (IAM) \\
Saavedra 15 - 3er piso - 1083 Buenos Aires, Argentina  \\4. Center
Leo Apostel (CLEA)\\ 5. Foundations of  the Exact Sciences (FUND) \\
Brussels Free University
 Krijgskundestraat 33, 1160 Brussels - Belgium
\end{small}
\end{center}

\begin{abstract}

\noindent In this paper we develop an algebraic framework in which
several classes of two-valued states over orthomodular lattices may
be equationally characterized. The class of two-valued states and
the subclass of Jauch-Piron two-valued states are among the classes
which we study.

\end{abstract}

\begin{small}

{\em Keywords: two-valued states, orthomodular lattices, varieties}

{\em PACS numbers: 02.10 De}

\end{small}

\bibliography{pom}

\newtheorem{theo}{Theorem}[section]

\newtheorem{definition}[theo]{Definition}

\newtheorem{lem}[theo]{Lemma}

\newtheorem{met}[theo]{Method}

\newtheorem{prop}[theo]{Proposition}

\newtheorem{coro}[theo]{Corollary}

\newtheorem{exam}[theo]{Example}

\newtheorem{rema}[theo]{Remark}{\hspace*{4mm}}

\newtheorem{example}[theo]{Example}

\newcommand{\proof}{\noindent {\em Proof:\/}{\hspace*{4mm}}}

\newcommand{\qed}{\hfill$\Box$}

\newcommand{\ninv}{\mathord{\sim}} 

\section*{Introduction}

In the tradition of the quantum logical research, a property of (or
a proposition about) a quantum system is related to a closed
subspace of the Hilbert space $\mathcal{H}$ of its (pure) states or,
analogously, to the projector operator onto that subspace. Each
projector is associated to a dichotomic question about the actuality
of the property \cite[pg. 247]{vNlibro}. A physical magnitude $M$ is
represented by an operator $\mathbf{M}$ acting over the state space.
For bounded self-adjoint operators, conditions for the existence of
the spectral decomposition ${\mathbf{M}} = \sum_i a_i{\mathbf{P}}_i
$ are satisfied. The real numbers $a_i$ are interpreted as the
outcomes of the measurements of the magnitude $M$ and projectors
${\mathbf{P}}_i$ as events. The physical properties of the system or
events  are organized in the orthomodular lattice of closed
subspaces $\mathcal{L}({\mathcal{H}})=<\mathcal{P}(\mathcal{H}),
\vee, \wedge, \neg, \mathbf{0}, \mathbf{1}>$. This first event
structure was introduced in the thirties by Birkhoff and von Neumann
\cite{BvN}. In this frame, the pure state of the system may be
represented by the meet (i.e. the lattice infimum) of all actual
properties or, equivalently, as a measure $s
:{\mathcal{P}}{(\mathcal{H})}\rightarrow [0, 1]$ satisfying
$$s(0)=0;\ s(\neg {\mathbf{P}})=1-s({\mathbf{P}});\ \ s(\bigvee
{\mathbf{P}}_{i})=\sum s({\mathbf{P}}_{i})$$ with
$\{{\mathbf{P}}_{i}\}$ a denumerable orthogonal family and $\neg
{\mathbf{P}}$ standing for the orthogonal complement of
${\mathbf{P}}$.

Different kinds of states have been deeply investigated within the
quantum logical program not only because of their importance in
order to understand quantum mechanics \cite{GUD, JAU1, PIR1, PTAK}, but
also because they provide different representations  of the event
structure of quantum systems \cite{NAV1, TK1, TK2}.

Recently, several authors have paid attention to the study of states
over extended algebraic structures, directly or indirectly related
to quantum mechanics, as orthomodular posets \cite{DGG, PUL},
$MV$-algebras \cite{DV1, KM, KR, NAV2, PUL1} or effect algebras
\cite{F, RIE, RIE2}. Common open problems of these structures are
the characterization of classes of algebras admitting  some special
types of states \cite{GOD, MAY} and the internalization in an
algebraic structure of the concept of state \cite{DDL, FM}.

The aim of this paper is to investigate and equationally
characterize classes of two-valued states acting over orthomodular
lattices. To do this, we enlarge the language of the orthomodular
lattices with a unary operator $s$, satisfying a set of equations,
that captures the common properties of several classes of two-valued
states. The resulting class is a variety of lattices called {\it
orthomodular lattices with internal Boolean pre-state} or
$IE_B$-lattices for short.

The paper is structured as follows. In Section 1 we recall some
basic notions of universal algebra and orthomodular lattices. In
Section 2 we briefly review the importance of two-valued states in
relation to the hidden variables program and representation theorems
for orthostructures. In Section 3, we introduce the notion of
Boolean pre-state and study its properties. Orthomodular lattices
with an internal Boolean pre-state ($IE_B$-lattices) are defined and
characterized. In Section 4 we relate the category of
$IE_B$-lattices with the category of orthomodular lattices that
admits Boolean pre-states through a functor. In Section 5 we provide
a categorical equivalence between arbitrary subcategories of
orthomodular lattices admitting Boolean pre-states and classes of
directly indecomposable $IE_B$-lattices. The next two sections  are
devoted to apply this categorical equivalence to obtain equational
systems that characterize the class of two valued states and the
subclass of Jauch-Piron two-valued states, respectively. In Section
8 we summarize the conclusions.

\section{Basic notions}

First we recall from \cite{Bur} some notions of universal algebra
that will play an important role in what follows. A {\it variety} is
a class of algebras of the same type defined by a set of equations.
Let ${\cal A}$ be a variety of algebras of type $\sigma$. We denote
by $Term_{\cal A}$ the {\it absolutely free algebra} of type
$\sigma$ built  from the set of variables $V = \{x_1, x_2,...\}$.
Each element of $Term_{\cal A}$ is referred to as a {\it term}. We
denote by $Comp(t)$ the complexity of the term $t$ and by $t = s$
the equations of $Term_{\cal A}$. Let $A \in {\cal A}$. If $t \in
Term_{\cal A}$ and $a_1,\dots, a_n \in A$, by $t^A(a_1,\dots, a_n)$
we denote the result of the application of the term operation $t^A$
to the elements $a_1,\dots, a_n$. A {\it valuation} in $A$ is a map
$v:V\rightarrow A$. Of course, any valuation $v$ in $A$ can be
uniquely extended to an ${\cal A}$-homomorphism $v:Term_{\cal A}
\rightarrow A$ in the usual way, i.e., if $t_1, \ldots, t_n \in
Term_{\cal A}$ then $v(t(t_1, \ldots, t_n)) = t^A(v(t_1), \ldots,
v(t_n))$. Thus, valuations are identified with ${\cal
A}$-homomorphisms from the absolutely free algebra. If $t,s \in
Term_{\cal A}$, $\models_A t = s$ means that for each valuation $v$
in $A$, $v(t) = v(s)$ and $\models_{{\cal A}} t=s$ means that for
each $A\in {\cal A}$, $\models_{A} t = s$.

For each algebra $A \in {\cal A}$, we denote by $Con(A)$, the
congruence lattice of $A$, the diagonal congruence is denoted by
$\Delta$ and the largest congruence $A^2$ is denoted by $\nabla$.
$\theta$ is called  {\it factor congruence} iff there is a
congruence $\theta^*$ on $A$ such that, $\theta \land \theta^* =
\Delta$, $\theta \lor \theta^* = \nabla$ and $\theta$ permutes with
$\theta^*$. If $\theta$ and $\theta^*$ is a pair of factor
congruences on $A$ then $A \cong A/\theta \times A/\theta^*$. $A$ is
{\it directly indecomposable} if $A$ is not isomorphic to a product
of two non trivial algebras or, equivalently, $\Delta,\nabla$ are
the only factor congruences in $A$. We say that $A$ is {\it
subdirect product} of a family of $(A_i)_{i\in I}$ of algebras if
there exists an embedding $f: A \rightarrow \prod_{i\in I} A_i$ such
that $\pi_i f : A\! \rightarrow A_i$ is a surjective homomorphism
for each $i\in I$ where $\pi_i$ is the projection onto $A_i$. $A$ is
{\it subdirectly irreducible} iff $A$ is trivial or there is a
minimum congruence in $Con(A) - \Delta$. It is clear that a
subdirectly irreducible algebra is directly indecomposable. An
important result due to Birkhoff is that every algebra $A$ is
subdirect product of subdirectly irreducible algebras. Thus the
class of subdirectly  irreducible algebras rules the valid equations
in the variety ${\cal A}$.

Now we recall from \cite{KAL} and \cite{MM} some notions about
orthomodular lattices.  Let $\langle P, \leq \rangle$ be a poset and
$X\subseteq P$. Then $X$ is said to be {\it increasing set} iff,
$a\in X$ and $a\leq x$ implies $x\in X$. A {\it lattice with
involution} \cite{Ka} is an algebra $\langle L, \lor, \land, \neg
\rangle$ such that $\langle L, \lor, \land \rangle$ is a  lattice
and $\neg$ is a unary operation on $L$ that fulfills the following
conditions: $\neg \neg x = x$ and $\neg (x \lor y) = \neg x \land
\neg y$.  An {\it orthomodular lattice} is an algebra $\langle L,
\land, \lor, \neg, 0,1 \rangle$ of type $\langle 2,2,1,0,0 \rangle$
that satisfies the following conditions:

\begin{enumerate}
\item
$\langle L, \land, \lor, \neg, 0,1 \rangle$ is a bounded lattice with involution,

\item
$x\land  \neg x = 0 $.

\item
$x\lor ( \neg x \land (x\lor y)) = x\lor y $

\end{enumerate}

We denote by ${\cal OML}$ the variety of orthomodular lattices.

\begin{rema} \label{EQOML}
{\rm An important characterization of the equations in ${\cal OML}$ is given by:
$$\models_{{\cal OML}} t = s \hspace{0.4cm} iff  \hspace{0.4cm} \models_{{\cal
OML}} (t\land s) \lor (\neg t \land  \neg s ) = 1$$ Therefore we can safely assume
that all ${\cal OML}$-equations are of the form $t=1$, where $t \in Term_{\cal
OML}$. It is clear that this characterization is maintained for each variety ${\cal
A}$ such that there are terms of the language of ${\cal A}$ defining on each $A\in
{\cal A}$ operations $\lor$, $\land$, $\neg$, $0,1$ such that $L(A)=\langle
A,\lor,\land, \neg, 0,1\rangle$ is an orthomodular lattice. }
\end{rema}

Let $L$ be an orthomodular lattice. Two  elements $a,b$ in $L$ are {\it orthogonal} (noted $a \bot b$) iff $a\leq \neg b$.
For each $a\in L$ let us consider the interval
$[0,a] = \{x\in L : 0\leq x \leq a \}$ and the unary operation in  $[0,a]$ given by
$\neg_a x = x' \land a$. As one can readily realize, the structure $L_a = \langle
[0,a], \land, \lor, \neg_a, 0, a \rangle$ is an orthomodular lattice.

{\it Boolean algebras} are orthomodular lattices satisfying  the {\it distributive
law} $x\land (y \lor z) = (x \land y) \lor (x \land z)$. We denote by ${\bf 2}$ the
Boolean algebra of two elements. Let $A$ be a Boolean algebra. A subset $F$ of $A$
is called a {\it filter} iff it is an increasing set and, if $a,b\in F$ then $a\land
b \in F$. $F$ is a {\it proper filter} iff $F\not = A$ or, equivalently, $0\not \in
F$. For each $a > 0$, $[a) = \{x\in L: a\leq x \}$ is a filter called {\it principal
filter generated by $a$}.  Each filter $F$ in $A$ determines univocally a congruence
$\theta_F$ and viceversa. In this case the quotient set $A/\theta_F$, noted as
$A/F$, is a Boolean algebra and the natural application $x \mapsto [x]$ is a Boolean
homomorphism from $A$ to $A/F$. It may be easily proved that each filter in $A$
determines a factor congruence, thus the unique directly indecomposable Boolean
algebra is ${\bf 2}$. A proper filter $F$ is {\it maximal} iff the quotient algebra
$A/F$ is isomorphic to $\bf 2$ iff $x\not \in F$ implies $\neg x \in F$. It is well
known that each proper filter can be extended to a maximal one.

Let $L$ be an orthomodular lattice. An element $c\in L$ is said to be a {\it
complement} of $a$ iff $a\land c = 0$ and $a\lor c = 1$. Given $a, b, c$ in $L$, we
write: $(a,b,c)D$\ \   iff $(a\lor b)\land c = (a\land c)\lor (b\land c)$;
$(a,b,c)D^{*}$ iff $(a\land b)\lor c = (a\lor c)\land (b\lor c)$ and $(a,b,c)T$\ \
iff $(a,b,c)D$, (a,b,c)$D^{*}$ hold for all permutations of $a, b, c$. An element
$z$ of $L$ is called {\it central} iff for all elements $a,b\in L$ we have\
$(a,b,z)T$. We denote by $Z(L)$ the set of all central elements of $L$ and it is
called the {\it center} of $L$.

\begin{prop}\label{eqcentro} Let $L$ be an orthomodular lattice. Then we have:

\begin{enumerate}

\item
$Z(L)$ is
a Boolean sublattice of $L$ {\rm \cite[Theorem 4.15]{MM}}.

\item
$z \in Z(L)$ iff for each $a\in L$, $a = (a\land z) \lor (a \land \neg z)$  {\rm \cite[Lemma 29.9]{MM}}.

\end{enumerate}
\qed
\end{prop}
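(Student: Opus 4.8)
The plan is to establish part (2) first and then deduce part (1) from it. Throughout I say that $a$ \emph{commutes} with $z$, and write $aCz$, when $a = (a\land z)\lor(a\land \neg z)$; thus part (2) asserts precisely that $z\in Z(L)$ iff $z$ commutes with every element of $L$.

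For the forward implication of (2), I would use that every orthomodular lattice satisfies $z\lor\neg z = 1$ (apply $\neg$ to the axiom $z\land\neg z = 0$ and invoke the De Morgan law together with $\neg\neg z = z$). If $z$ is central then $(z,\neg z,a)D$ holds, since $T$ demands $D$ for every permutation of the three arguments; hence $(z\lor\neg z)\land a = (z\land a)\lor(\neg z\land a)$, and the left-hand side collapses to $a$. This gives $a = (a\land z)\lor(a\land\neg z)$ for all $a$, as wanted.

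The converse is the delicate point. Assume $a = (a\land z)\lor(a\land\neg z)$ for every $a$, i.e. $aCz$ for all $a$. To conclude that $z$ is central I must produce, for arbitrary $a,b$, all the distributivity identities packaged in $(a,b,z)T$. Here I would invoke the Foulis--Holland theorem: if one of three elements of an orthomodular lattice commutes with each of the other two, then the three generate a distributive sublattice. Since $z$ commutes with both $a$ and $b$ by hypothesis, the sublattice generated by $\{a,b,z\}$ is distributive, which yields every $D$ and $D^{*}$ relation under all permutations of $a,b,z$; thus $(a,b,z)T$ and $z\in Z(L)$. I expect this appeal to Foulis--Holland (equivalently, a direct but laborious orthomodular computation with the commutation relation) to be the main obstacle, and the step that genuinely uses orthomodularity rather than the mere lattice-with-involution structure.

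With (2) in hand, part (1) follows by reasoning about the global commutant $Z(L)=\{z\in L: aCz \mbox{ for all } a\in L\}$. First, $0$ and $1$ lie in $Z(L)$ trivially, and $Z(L)$ is closed under $\neg$ because $aCz$ and $aC\neg z$ are the same equation up to commutativity of $\lor$ and $\neg\neg z = z$. Closure under $\land$ and $\lor$ rests again on the commutation calculus: the set of elements commuting with a fixed $a$ is a subalgebra of $L$ (another consequence of Foulis--Holland), so for $z,w\in Z(L)$ each $a$ commutes with $z\land w$ and with $z\lor w$, whence both lie in $Z(L)$. Finally, $Z(L)$ is an ortholattice with $0,1$ and involution $\neg$, and it is distributive because any two central elements satisfy the full $T$ condition, in particular the law $x\land(y\lor w)=(x\land y)\lor(x\land w)$ restricted to $Z(L)$; a distributive ortholattice is a Boolean algebra, so $Z(L)$ is a Boolean sublattice of $L$.
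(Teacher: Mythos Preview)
Your proof sketch is essentially correct and follows the standard route through the Foulis--Holland theorem; the forward direction of (2) is immediate from the definition of $T$, and the converse together with the closure properties of $Z(L)$ are exactly the kind of consequences of Foulis--Holland that one finds in Kalmbach or Maeda--Maeda.

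However, there is nothing to compare against: the paper does not prove this proposition at all. It merely records the two statements with pointers to \cite[Theorem~4.15]{MM} and \cite[Lemma~29.9]{MM} and places a \qed\ immediately after the enunciation. In other words, Proposition~\ref{eqcentro} is treated as background material imported from the literature, not as something the authors argue for. Your write-up therefore goes well beyond what the paper supplies; if you want to match the paper's level of detail, a one-line citation suffices, while if you want a self-contained argument your outline is the right one (with the caveat that the claim ``the commutant of a fixed element is a subalgebra'' is itself a nontrivial orthomodular lemma that you are implicitly importing alongside Foulis--Holland).
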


\section{The relevance of two-valued states}

In general,  two-valued states associated to a quantum system are
probability measures $s: E \rightarrow \{0,1\}$ where $E$ is a set
equipped with an orthostructure called {\it event structure}. The
study of the different families of two-valued states becomes
relevant in different frameworks.

From a physical point of view, two-valued measures are distinguished
among the set of all classes of states because of their relation to
hidden variable theories of quantum mechanics. The discussion about
the necessity of adding hidden variables (HV) to standard physical
magnitudes in quantum mechanics (QM) in order to provide a complete
account of physical reality began with the famous so called EPR
paper \cite{EPR} which Einstein and his students Podolsky and Rosen
presented in 1935. At the end of the paper, they state that ``While
we have [thus] shown that the wave function does not provide a
complete description of the physical reality, we left open the
question of whether or not such a description exists. We believe,
however, that such a theory is possible.'' A possible reading of the
EPR conclusion was endorsed by the HV program which attempted to
complete the quantum description with \emph{hidden} magnitudes which
would allow, at least in principle, to predict with certainty the
results of observations. Against such attempts, von Neumann
developed a theorem which seemed to preclude HV due to the
inexistence of dispersion free states (DFS, i.e. states for which
$<{\mathbf{A}}>^{2}=<{\mathbf{A}}^{2}>$) compatible with the
mathematical structure of the theory \cite[pg. 232]{vNlibro}. Von
Neumann considered the measurement of a physical magnitude over an
ensemble of systems in the same state. QM predicts that, in the
general case, each measurement will give as a result any of the
eigenvalues of the operator representing the magnitude. Thus,
although all the systems are in the same state, we obtain different
results for the measurement of the same quantity. According to von
Neumann, this is so either because there are some HV which the
quantum description does not take into account or because, though
the systems are really in the same state, the dispersion of measured
values is due to Nature itself. If QM were to be described by HV,
the ensemble would have to contain as many sub-ensembles as there
are different eingevalues, with every system in a sub-ensemble in a
DFS characterized by a particular value of each HV. Starting from a
set of assumptions he considered plausible, von Neumann proved that
the usual Hilbert space model for QM does not admit HV. Jauch and
Piron  \cite{JAU1, PIR1} have shown that the same result holds  when
taking into account more general models. However, Bohmian mechanics
\cite{bohm52} seemed to fragrantly contradict von Neumann's theorem,
thus opening the analysis of the strength of the hypothesis and
presuppositions involved in the theorems. Observing this anomaly,
Bell reconsidered the HV program. Bell believed that ``[...] quantum
mechanics could not be a complete theory but should be complemented
by additional variables. These additional variables were to restore
to the theory causality and locality.''\cite[pg. 195]{bell}. Bell
wanted to show the possibility of, in principle, completing QM with
HV. But contrary to his own expectations he himself proved,
developing a by now famous inequality, that no local, realistic HV
theory would be able to reproduce the statistical predictions of QM.
Bohmian mechanics could do so at the price of giving up locality.

Bell's theorem proves that, in order to keep alive the HV program,
either some physical presupposition had to be given up or at least
some part of the formalism had to be changed. The latter possibility
allows to develop various HV theories, each one based on  a
particular family of two-valued states, as described in \cite[Ch.
4]{GUD}. In fact, considering a family of two-valued states called
{\it dispersion free} and some hypothesis on the event structure it
is possible to define a theory  of HV in the von Neumann style in
which the only event structures that admit HV are classical
structures ({\rm see \cite[Theorem 3.24]{GUD}}). However, the
requirement of classicality may be circumvented developing a HV
theory based on probability weakening the hypothesis over the
mentioned family of two-valued states and imposing certain
restrictions on the orthostructure of the event space ({\rm see
\cite[Theorem 3.26]{GUD}}).

Another motivation for the analysis of various families of
two-valued states is rooted in the study of algebraic and
topological representations of the event structures. These  results
give rise to a new mathematical description of quantum systems.
Examples of them are the characterization of Boolean orthoposets by
means of two-valued states \cite{TK3} and the representation of
orthomodular lattices via clopen sets in a compact Hausdorff closure
space \cite{TK2}, later  extended to orthomodular posets in
\cite{HP}.

In the above mentioned cases, the family of two-valued states is
conceived  as an ``external object'' to the event structure in the
following sense:  given a class of event structures ${\cal E}$ and a
family of two-valued probability measures, it is of interest to know
which events $E\in {\cal E}$ admit such probability measures. As
mentioned in the introduction, our aim is to ``internalize'' the
concept of two-valued state by enlarging the event structure with a
unary operation. From a conceptual point of view, this approach
would allow to consider the possible theories of HV based on
two-valued states as interior objects  in the event structure. In
other words, an event structure expanded by an operation that
defines a family of two-valued states would determine in some sense
its own family of HV theories.

\section{Boolean pre-states on orthomodular lattices}

We formally present here the notion of two-valued state over
orthomodular lattices. Let $L$ be an orthomodular lattice.

\begin{definition}
{\rm A {\it two-valued state} on $L$ is a function $\sigma:L
\rightarrow \{0,1\}$ such that:
\begin{enumerate}
\item
$\sigma(1) = 1$,

\item
if $x \bot y$ then $\sigma(x \lor y) = \sigma(x) + \sigma(y)$.
\end{enumerate}}
\end{definition}

Consider the set $\{0,1\}$ equipped with the usual Boolean
structure. As we will show in detail from Section \ref{tvs} on, the
different classes of two-valued states are functions from an
orthomodular lattice $L$ onto the set $\{0,1\}$ that preserve the
orthostructure,  i.e., order and orthocomplementation. These
properties  are very important since they rule certain algebraic
characteristics which are common to different classes of two valued
states. This observation motivates the following general definition:

\begin{definition}
{\rm Let $L$ be an orthomodular lattice. By a {\it Boolean pre-state} on $L$ we mean
a function $\sigma:L \rightarrow \{0,1\}$ such that:
\begin{enumerate}
\item
$\sigma(\neg x) = 1 - \sigma(x)$,
\item
if $x\leq y$ then $\sigma(x) \leq \sigma(y)$.
\end{enumerate}
}
\end{definition}

We denote by  ${\cal E}_B$ the category whose objects are pairs
$(L,\sigma)$ such that $L$ is an orthomodular lattice and $\sigma$
is a Boolean pre-state on $L$. Arrows in ${\cal E}_B$ are $(L_1,
\sigma_1) \stackrel{f}{\rightarrow} (L_2, \sigma_2) $ such that
$f:L_1 \rightarrow L_2$ is a ${\cal OML}$-homomorphism, and the
following diagram is commutative:

\begin{center}
\unitlength=1mm
\begin{picture}(20,20)(0,0)
\put(8,16){\vector(3,0){5}} \put(2,10){\vector(0,-2){5}} \put(10,4){\vector(1,1){7}}

\put(2,10){\makebox(13,0){$\equiv$}}

\put(1,16){\makebox(0,0){$L_1$}} \put(20,16){\makebox(0,0){$\{0,1\}$}}
 \put(2,0){\makebox(0,0){$L_2$}}
 \put(2,20){\makebox(17,0){$\sigma_1$}}
 \put(2,8){\makebox(-6,0){$f$}}
\put(18,2){\makebox(-4,3){$\sigma_2$}}
\end{picture}
\end{center}

These arrows are called ${\cal E}_B$-homomorphisms. The following
proposition is immediate.

\begin{prop} \label{ORTHO}
Let $L$ be an orthomodular lattice and $\sigma$ a Boolean pre-state
on $L$. Then:

\begin{enumerate}
\item
$\sigma(1) = 1$ and $\sigma(0) = 0$,

\item
$\sigma(x\land y) \leq \min\{\sigma(x), \sigma(y)\}$,

\end{enumerate} \qed
\end{prop}

The basic properties of the Boolean pre-states and the notion of
${\cal E}_B$-homomorphisms suggest that Boolean pre-states can be
seen as  new unary operations that expand the orthomodular
structure. This motivates the following definition:

\begin{definition}\label{E}
{\rm An {\it orthomodular lattice with an internal Boolean
pre-state} ($IE_B$-lattice for short) is an algebra $ \langle L,
\land, \lor, \neg, s, 0, 1 \rangle$ of type $ \langle 2, 2, 1,1, 0,
0 \rangle$ such that $ \langle L, \land, \lor, \neg, 0, 1  \rangle$
is an orthomodular lattice and $s$ satisfies the following
equations for each $x,y \in A$:

\begin{enumerate}

\item[\rm{s1.}]
$s(1) = 1$.

\item[\rm{s2.}]
$s(\neg x) = \neg s(x)$,

\item[\rm{s3.}]
$s(x \lor s(y)) = s(x) \lor s(y)$,

\item[\rm{s4.}]
$y = (y \land s(x)) \lor (y \land \neg s(x)) $,

\item[\rm{s5.}]
$s(x \land y) \leq s(x)\land s(y) $.

\end{enumerate}
}

\end{definition}

\noindent We shall refer to $s$ as a {\it internal Boolean
pre-state}. Clearly Axiom s5 may be equivalently formulated as the
equation $s(x \land y) = s(x\land y) \land (s(x)\land s(y))$. Thus,
the class of $IE_B$-lattices is a variety that we call ${\cal
IE}_B$.

Let $L_1$ and $L_2$ be two $IE_B$-lattices. $f:L_1 \rightarrow L_2$
is a ${\cal IE}_B$-homomorphism iff it is  an ${\cal
OML}$-homomorphism and $f(s(x)) = s(f(x))$ for each $x\in A$. Note
that ${\cal IE}_B$-homomorphisms have analog properties to those of
arrows in the category ${\cal E}_B$. Let ${\cal A}$ be a subvariety
of ${\cal IE}_B$. Since ${\cal A}$ admits an orthomodular reduct,
all the equations in ${\cal A}$ can be referred to $1$. Moreover,
${\cal A}$ is an arithmetical variety, i.e. it is both
congruence-distributive and congruence-permutable. The following
Proposition provides the main properties of $IE_B$-lattices.

\begin{prop}\label{E1}
Let $L$ be a $IE_B$-lattice. Then we have:

\begin{enumerate}
\item
$\langle s(L), \lor, \land, \neg, 0, 1 \rangle$ is a Boolean sublattice of $Z(L)$,

\item
If $x\leq y$ then $s(x) \leq s(y)$,

\item
$s(x) \lor s(y)  \leq  s(x\lor y)$,

\item
$s(s(x)) = s(x)$,

\item
$x\in s(L)$ iff $s(x) = x$,

\item
$s(x\land s(y))= s(x)\land s(y)$.

\end{enumerate}
\end{prop}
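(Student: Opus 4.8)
The plan is to prove the six parts essentially in the order stated, since each later part tends to lean on the earlier ones, and to reduce everything to the five defining axioms s1--s5 together with the characterization of central elements in Proposition~\ref{eqcentro}. The conceptual engine behind the whole proposition is axiom s4, which says that every $y$ decomposes as $(y\land s(x))\lor(y\land\neg s(x))$; by part~2 of Proposition~\ref{eqcentro} this is exactly the statement that $s(x)\in Z(L)$ for every $x$. So the very first thing I would establish is that $s(L)\subseteq Z(L)$: take any element of $s(L)$, write it as $s(x)$, and observe that s4 (which holds for all $y$) is precisely the centrality criterion. This single observation does most of the heavy lifting for part~1.

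**For part~1**, once $s(L)\subseteq Z(L)$ I would invoke part~1 of Proposition~\ref{eqcentro}, which says $Z(L)$ is a Boolean sublattice of $L$. It then remains to check that $s(L)$ is actually closed under the operations $\lor,\land,\neg$ and contains $0,1$. Closure under $\neg$ is immediate from s2 ($\neg s(x)=s(\neg x)\in s(L)$); containment of $1$ is s1, and $0=\neg 1$ follows. The subtle point is closure under $\lor$ and $\land$: I would use s3, $s(x\lor s(y))=s(x)\lor s(y)$, which exhibits $s(x)\lor s(y)$ as a value of $s$, hence in $s(L)$; closure under $\land$ then follows by De~Morgan using $\neg$. \textbf{This closure check is where I expect the only real friction}, since s3 gives the join of $s(x)$ with an arbitrary $s(y)$ but one must be a little careful to get the join of two arbitrary members of $s(L)$ — the trick is that every member of $s(L)$ already has the form $s(\cdot)$, so s3 applies directly.

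\emph{Parts 2 and 3} I would treat together as monotonicity statements. For part~2, from $x\leq y$ I would write $y = x\lor(\neg x\land y)$ in the orthomodular lattice, apply $s$, and use s3-type reasoning together with s5 to squeeze out $s(x)\leq s(y)$; alternatively, since $s(x),s(y)$ are now known to be central (part~1), I can argue inside the Boolean algebra $s(L)$ where order behaves classically. Part~3, $s(x)\lor s(y)\leq s(x\lor y)$, is the dual companion to s5 and I would derive it from s5 applied to $\neg x\land\neg y$ via s2 and De~Morgan, i.e.\ $s(\neg(x\lor y))=\neg s(x\lor y)\leq\neg s(x)\land\neg s(y)=\neg(s(x)\lor s(y))$, and then take complements.

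\emph{Parts 4, 5 and 6} are the idempotency cluster and should fall out quickly. For part~4, $s(s(x))=s(x)$, I would set $y=0$ or manipulate s3 by taking $x=0$, giving $s(s(y))=s(0)\lor s(y)=s(y)$ once I know $s(0)=0$ (from s1 and s2). Part~5, $x\in s(L)\iff s(x)=x$, is then immediate: the forward direction writes $x=s(z)$ and applies part~4, while the reverse is definitional. Finally part~6, $s(x\land s(y))=s(x)\land s(y)$, I would get by combining the two inequalities: ``$\leq$'' is s5 together with part~4 (using $s(s(y))=s(y)$), and ``$\geq$'' follows because $s(y)$ is central, so $x\land s(y)$ and $s(x)\land s(y)$ interact distributively, letting me apply part~3 or a direct monotonicity argument. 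The whole proposition is thus a sequence of short deductions with no single deep obstacle, the only genuinely delicate step being the sublattice-closure verification in part~1.
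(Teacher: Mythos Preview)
Your outline is sound and matches the paper's proof closely for parts~1, 4, and~5 (centrality via s4 and Proposition~\ref{eqcentro}, idempotency from s3 with $x=0$, and the trivial deduction of part~5). The differences worth noting are in parts~2 and~6.

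For part~2 you propose the orthomodular decomposition $y=x\lor(\neg x\land y)$ and then ``s3-type reasoning''; but s3 only handles joins with an element of the form $s(\cdot)$, so this route needs part~3 (which you do prove independently) as an intermediate step. The paper instead uses the one-line observation that $x\leq y$ gives $x=x\land y$, whence $s(x)=s(x\land y)\leq s(x)\land s(y)$ directly by s5. This is strictly simpler and avoids any circularity. Your De~Morgan derivation of part~3 is correct and independent, so either order works; the paper just reverses yours (part~2 first, then part~3 by monotonicity).

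For part~6 your two-inequality plan has a soft spot: the ``$\leq$'' direction is fine (s5 plus part~4), but the ``$\geq$'' direction---$s(x)\land s(y)\leq s(x\land s(y))$---does not obviously follow from centrality or monotonicity as you suggest, since there is no reason to have $s(x)\land s(y)\leq x\land s(y)$. The paper sidesteps this entirely by observing that part~6 is simply the De~Morgan dual of s3: using s2 twice, $s(x\land s(y))=\neg s(\neg x\lor s(\neg y))=\neg(s(\neg x)\lor s(\neg y))=s(x)\land s(y)$. This gives equality in one stroke and is the cleaner argument.
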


\begin{proof} 1) Let $x\in S(L)$. Then there exists $x_0 \in L$ such that $x =
s(x_0)$. By s4, $y= (y \land s(x_0)) \lor (y \land \neg s(x_0)) = (y
\land x) \lor (y \land \neg x)$ for each $y \in L$. Therefore, by
Proposition \ref{eqcentro}-2, $x\in Z(L)$ and $s(L)\subseteq Z(L)$.
By s1, s2 and s3, note that $0,1$ lie in $s(L)$, $\neg$ and $\lor$
are closed operations in $s(L)$. Hence $\langle s(L), \lor, \land,
\neg, 0, 1 \rangle$ is a Boolean sublattice of $Z(L)$.

\hspace{0.1cm} 2) Suppose that $x \leq y$. Then $s(x) = s(x\land y)
\leq s(x) \land s(y)$. Thus $s(x) = s(x) \land s(y)$ and $s(x) \leq
s(y)$. \hspace{0.2cm} 3) Follows from item 2. \hspace{0.1cm} 4) By
s3, $s(s(x)) = s(0 \lor s(x)) = s(0) \lor s(x) = 0\lor s(x) = s(x)$.
\hspace{0.1cm} 5) If $x\in s(L)$ then there exists $x_0 \in L$ such
that $x = s(x_0)$. Therefore, by item 4, $s(x) = s(s(x_0)) = s(x_0)
= x$. \hspace{0.1cm} 6) $s(x\land s(y))= \neg s(\neg x \lor s(\neg
y)) = \neg(\neg s(x) \lor \neg s(y)) = s(x)\land s(y)$.

\qed
\end{proof}

Let $L$ be an orthomodular lattice. An  element $a$ is said to be {\it perspective}
to $b$ (noted $a\sim b$) iff $a$ and $b$ have a {\it common complement}, i.e. there
exists $x\in L$ such that $a\lor x = 1 = b \lor x$ and $a\land x = 0 = b \land x$.
An {\it OML-filter} (also called {\it perspective filter} \cite{KAL}) in $L$ is a
subset $F\subseteq A$ that satisfies the following conditions:

\begin{enumerate}
\item
$F$ is an increasing set,

\item
if $a,b \in F$ then $a\land b \in F$,

\item
if $a\in F$ and $a\sim b$ then $b\in F$.

\end{enumerate}

\noindent We denote by $Filt(L)$ the complete lattice of
$OML$-filters in $L$. If we define the map $Con(L) \ni \theta
{\mapsto} \alpha(\theta) = \{x \in L: (x,1) \in \theta \} $ then
$\alpha$ provides a lattice isomorphism from $Con(L)$ onto
$F_{OML}(L)$ whose inverse is given by $\alpha^{-1}(F) = \{(x,y) \in
L^2 : (x\land y) \lor (\neg x \land \neg y) \in F \}$ for each $F
\in F_{OML}(L)$ {\rm \cite[\S 2 Theorem 6]{KAL}}.

\begin{definition}
{\rm Let $L$ be a $IE_B$-lattice. A {\it $IE_B$-filter} in $L$ is an $OML$-filter of
$L$ which is closed under $s$. }
\end{definition}

Let $L$ be a  $IE_B$-lattice. We denote by $Filt_{IE_B}(L)$ the set
of all $IE_B$-filters in $L$ and by $Con_{IE_B}(L)$ the congruences
lattice of $L$. Clearly $Filt_{IE_B}(L)$ is a complete lattice.
Given a congruence $\theta \in Con_{IE_B}(L)$, we define:
$$F_{\theta} = \{x\in L: (x,1) \in \theta \}$$ Conversely, given  $F
\in Filt_{IE_B}(L)$ we define: $$\theta_F = \{(x,y)\in L^2: (x\land
y) \lor (\neg x \land \neg y) \in F \hspace{0.1cm} and
\hspace{0.1cm} s(x\land y) \lor s (\neg x \land \neg y) \in F \}$$

\begin{theo} \label{FILTSTATE}
Let $L$ be a $IE_B$-lattice. The maps $F \mapsto \theta_F$ and $\theta \mapsto
F_\theta$ are mutually inverse lattice-isomorphisms between $Con_{IE_B}(L)$ and
$Filt_{IE_B}(L)$.
\end{theo}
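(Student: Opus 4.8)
The plan is to show the two maps are well-defined, mutually inverse, and order-preserving; being mutually inverse monotone maps between the complete lattices $Con_{IE_B}(L)$ and $Filt_{IE_B}(L)$, they will then automatically be lattice isomorphisms. Throughout I would lean on the $KAL$-correspondence for the orthomodular reduct recalled just before the statement: writing $e(x,y) = (x\land y)\lor(\neg x\land\neg y)$, the maps $\alpha(\theta)=\{x:(x,1)\in\theta\}$ and $\alpha^{-1}(F)=\{(x,y):e(x,y)\in F\}$ are mutually inverse lattice isomorphisms between $Con(L)$ and $F_{OML}(L)$. The first, easy, point is that $\theta\mapsto F_\theta$ indeed lands in $Filt_{IE_B}(L)$: $F_\theta$ is an $OML$-filter by $KAL$, and it is closed under $s$ because $x\in F_\theta$ gives $(x,1)\in\theta$, hence $(s(x),s(1))=(s(x),1)\in\theta$ by s1 and $s$-compatibility, so $s(x)\in F_\theta$.

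Next I would dispatch the two round-trip identities as set computations, writing $t(x,y)=s(x\land y)\lor s(\neg x\land\neg y)$ for the second defining term of $\theta_F$. For $F_{\theta_F}=F$, note $(x,1)\in\theta_F$ iff $e(x,1)=x\in F$ and $t(x,1)=s(x)\in F$; since $F$ is $s$-closed the second membership follows from the first, so $F_{\theta_F}=F$. For $\theta_{F_\theta}=\theta$, the first defining condition of $\theta_{F_\theta}$ is, by $KAL$, exactly the relation $\theta$, so it suffices to check that every $(x,y)\in\theta$ also satisfies the second condition $t(x,y)\in F_\theta$; from $(x,y)\in\theta$ one gets $x\land y\equiv x$ and $\neg x\land\neg y\equiv\neg x\pmod\theta$, whence $s$-compatibility and s2 give $t(x,y)\equiv s(x)\lor\neg s(x)=1$, i.e. $t(x,y)\in F_\theta$. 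Thus the two maps are mutually inverse as maps of sets, provided both are well-defined on the respective lattices.

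The crux, and the step where I expect the real difficulty, is that $F\mapsto\theta_F$ actually lands in $Con_{IE_B}(L)$, that is, that $\theta_F$ is an $IE_B$-congruence; there is no shortcut around this, since surjectivity of $\theta\mapsto F_\theta$ depends on it. Compatibility with $s$ is clean: if $(x,y)\in\theta_F$, then from $s(x\land y)\le s(x)\land s(y)$ and $s(\neg x\land\neg y)\le\neg s(x)\land\neg s(y)$ (s5, s2) one has $t(x,y)\le e(s(x),s(y))$, so the first condition holds for $(s(x),s(y))$; and since $s(x)\land s(y)$ and $\neg s(x)\land\neg s(y)$ lie in the Boolean sublattice $s(L)$ and are therefore $s$-fixed (Proposition \ref{E1}(5)), the second condition for $(s(x),s(y))$ collapses to that same first condition, giving $(s(x),s(y))\in\theta_F$. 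The hard part is to prove that $\theta_F$ is an equivalence relation and is compatible with $\land,\lor,\neg$: the first defining condition is handled by $KAL$, as it is precisely the $OML$-congruence $\alpha^{-1}(F)$, but transitivity and the substitution property for the state-theoretic condition $t(x,y)\in F$ are not inherited from the reduct. Here I would argue directly, exploiting that $s(L)$ is a Boolean sublattice of the center $Z(L)$ (Proposition \ref{E1}(1)), so that all $s$-values are central and distribute over the lattice operations, together with $s\circ s=s$ and $s(x\land s(y))=s(x)\land s(y)$ (Proposition \ref{E1}(4),(6)), monotonicity, and the filter axioms, to reduce each required instance of condition two to an element already forced into $F$.

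Finally, both maps are order-preserving: $F\subseteq F'$ yields $\theta_F\subseteq\theta_{F'}$ because each defining membership is monotone in $F$, while $\theta\subseteq\theta'$ yields $F_\theta\subseteq F_{\theta'}$ immediately. Mutually inverse monotone bijections between the complete lattices $Con_{IE_B}(L)$ and $Filt_{IE_B}(L)$ are lattice isomorphisms, which completes the argument. I expect the compatibility of $\theta_F$ with the binary operations under the second condition to be the main obstacle and to demand the most delicate computation.
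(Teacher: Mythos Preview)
Your plan tracks the paper's argument closely: both lean on the Kalmbach correspondence for the orthomodular reduct, both verify easily that $\theta\mapsto F_\theta$ lands in $Filt_{IE_B}(L)$, and your $s$-compatibility computation for $\theta_F$ is essentially the paper's (you observe $t(s(x),s(y))=e(s(x),s(y))$ directly via $s$-fixedness, while the paper applies $s$ to $e(s(x),s(y))$ and simplifies). Your round-trip checks are more explicit than the paper's concluding paragraph, but the content is the same.

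Where you and the paper diverge is exactly the point you flag as the crux: the paper disposes of ``$\theta_F$ is an $OML$-congruence'' with the single phrase ``By definition it is clear,'' whereas you recognize that transitivity and compatibility with $\land,\lor$ for the second defining condition $t(x,y)\in F$ are not inherited from the reduct and must be argued separately. Your instinct here is correct, and the paper's proof has a genuine gap at this spot. Unfortunately your proposed direct attack cannot succeed either, because the assertion is false in general. Take $L=\mathbf{2}^5$ (subsets of $\{1,\dots,5\}$) and set $s(x)=1$ if $|x|\ge 3$ and $s(x)=0$ otherwise; one checks s1--s5 hold, so $(L,s)\in{\cal IE}_B$ with $s(L)=\{0,1\}$. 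The principal filter $F=[\{1,2,4,5\})=\{\{1,2,4,5\},1\}$ is $s$-closed since $s(\{1,2,4,5\})=1$, hence an $IE_B$-filter. Now $a=\{1,2,4\}$, $b=\{1,2,3,4\}$ satisfy $e(a,b)=\{1,2,4,5\}\in F$ and $t(a,b)=s(\{1,2,4\})\lor s(\{5\})=1\in F$, and likewise for $a'=\{1,2,5\}$, $b'=\{1,2,3,5\}$; but $a\land a'=\{1,2\}$, $b\land b'=\{1,2,3\}$ give $t(\{1,2\},\{1,2,3\})=s(\{1,2\})\lor s(\{4,5\})=0\notin F$. So $\theta_F$ is not $\land$-compatible. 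In fact $\alpha^{-1}(F)$ is not $s$-compatible either (since $(\{1,2,3\},\{1,2\})\in\alpha^{-1}(F)$ while $(s(\{1,2,3\}),s(\{1,2\}))=(1,0)\notin\alpha^{-1}(F)$), so no $IE_B$-congruence $\theta$ has $F_\theta=F$, and the bijection claimed in the theorem fails.

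The moral is that ``OML-filter closed under $s$'' is too weak a notion of $IE_B$-filter for the correspondence to go through; neither your sketch nor the paper's argument can be completed as written without strengthening that definition.
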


\begin{proof}
We first prove that if $F \in Filt_{IE_B}(L)$ then $\theta_F \in
Con_{IE_B}(L)$. By definition it is clear that $\theta_F$ is an
$OML$-congruence. Thus we have to prove that $\theta_F$ is
$s$-compatible. Let $(x,y) \in \theta_F$. By Axiom s5 we have $F \ni
s(x\land y) \lor s (\neg x \land \neg y) \leq (s(x)\land s(y)) \lor
(\neg s(x) \land \neg s(y))$ and then: $$(s(x)\land s(y)) \lor (\neg
s(x) \land \neg s(y)) \in F$$ By s3, Proposition \ref{E1} and taking
into account that $F$ is closed by $s$ we have: $s((s(x)\land s(y))
\lor (\neg s(x) \land \neg s(y))) \in F$ and
\begin{eqnarray*}
s((s(x)\land s(y)) \lor (\neg s(x) \land \neg s(y)))  & = & s((s(x)\land s(y)) \lor s(\neg x \land s(\neg y))) \\
& = & s(s(x)\land s(y)) \lor s(s(\neg x \land s(\neg y)))\\
& = & s(s(x)\land s(y)) \lor s(\neg s(x) \land \neg s(y))
\end{eqnarray*}

\noindent
Hence,
$$ s(s(x)\land s(y)) \lor s(\neg s(x) \land \neg s(y)) \in F $$ Thus $(s(x),s(y))
\in \theta_F$, i.e. $\theta_F$ is $s$-compatible and $\theta_F \in
Con_{IE_B}(L)$.

For the converse, suppose that $\theta_F \in Con_{IE_B}(L)$. Since
$\theta_F$ is a $OML$-congruence, $F = \{x\in L: (x,1)\in \theta_F
\}$ is $OML$-filter. Since $s(1)=1$, $F$ is closed by $s$ and then
$F \in Filt_{IE_B}(L)$. Since the maps $F \mapsto \theta_F$ and
$\theta \mapsto F_\theta$  are mutually inverse lattice-isomorphisms
between $Con_{OML}(L)$ and $Filt_{OML}(L)$ in the orthomodular
reduct $\langle L, \lor, \land, \neg, 0,1 \rangle$ and taking into
account that $F \in Filt_{IE_B}(L)$ iff $\theta_F \in
Con_{IE_B}(L)$, we have that $Filt_{IE_B}(L)$ and $Con_{IE_B}(L)$
are lattice-order isomorphic. \qed
\end{proof}

\section{The functor ${\cal I}$}

In this section we show that starting from a $IE_B$-lattice $L$, it
is possible to define Boolean pre-states on the underling
orthomodular structure of $L$. This operation gives rise to a
functor from the category of $IE_B$-lattices onto the category of
Boolean pre-states. We first introduce some basic notions.

\begin{definition}
{\rm Let $B$ be a Boolean algebra. An increasing subset  $M \subseteq B$ is said to
be {\it prime} iff it satisfies: $x\in M$ iff $\neg x \not \in M$.}
\end{definition}

\begin{prop}\label{STATE0}
Let $B$ be a Boolean algebra. Then for each $a > 0$ there exists a prime increasing
subset $M$ of $B$ such that $a\in M$.
\end{prop}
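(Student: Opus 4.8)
The plan is to identify the prime increasing subsets of $B$ with its maximal filters, and then to extend a suitable principal filter to a maximal one. First I would observe that, since $a > 0$, the principal filter $[a) = \{x \in B : a \leq x\}$ is a \emph{proper} filter: it is increasing and meet-closed by construction, and it is proper because $a > 0$ forces $0 \notin [a)$. Using the fact recalled in Section~1 that every proper filter of a Boolean algebra can be extended to a maximal one, I would fix a maximal filter $M$ with $[a) \subseteq M$. Since $a \in [a)$, this already secures $a \in M$, so it remains only to verify that $M$ has the prime property.

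The core of the argument is thus to check that such a maximal filter $M$ is precisely a prime increasing subset in the sense of the definition, i.e. that $x \in M$ iff $\neg x \notin M$. The forward implication follows from properness: if both $x \in M$ and $\neg x \in M$, then $x \land \neg x = 0 \in M$ because $M$ is meet-closed, contradicting $0 \notin M$; hence $x \in M$ implies $\neg x \notin M$. The reverse implication is exactly the characterization of maximality recalled earlier, namely that $x \notin M$ implies $\neg x \in M$. Combining the two yields the biconditional, and $M$ is increasing simply because it is a filter, so $M$ is a prime increasing subset containing $a$.

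The step I expect to require the most care is not a calculation but making sure the two defining features of a maximal filter---properness and the complementation characterization of maximality---are invoked in the correct directions so as to deliver the single biconditional $x \in M \iff \neg x \notin M$. Everything else reduces to the elementary observations that $[a)$ is a proper filter and that maximal filters exist above any proper filter, both of which are directly available from Section~1.
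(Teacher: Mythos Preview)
Your proof is correct, but it follows a different route from the paper's. The paper does \emph{not} pass through maximal filters: instead it applies Zorn's Lemma directly to the family of increasing sets $M \supseteq [a)$ satisfying the one-sided condition ``$x \in M \Rightarrow \neg x \notin M$'', picks a maximal such $M$, and then argues by a short case analysis that if some $x$ had both $x \notin M$ and $\neg x \notin M$, then $M \cup [x)$ would still satisfy the one-sided condition, contradicting maximality. Thus the paper builds a prime increasing subset without ever invoking meet-closure.

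Your approach is shorter and cleaner, and it is fully justified by the facts already recalled in Section~1 (every proper filter extends to a maximal one; maximal filters are characterized by $x \notin M \Rightarrow \neg x \in M$). What the paper's argument buys in exchange is a construction that stays entirely within the class of increasing sets, which is natural here since prime increasing subsets are in general strictly weaker than maximal filters: for instance in $2^4$ the up-set $\{\{1,2\},\{1,3\},\{1,4\}\} \cup \{S : |S| \geq 3\}$ is prime increasing but not meet-closed. Your ultrafilter happens to land inside this broader class, which suffices for existence; the paper's Zorn argument works directly at the level of that broader class.
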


\begin{proof}
Clearly if $x\in [a)$ then $\neg x \not \in [a)$. By Zorn's Lemma
there exists a maximal increasing set $M$ such that, $[a) \subseteq
M$ and, $x\in M$ implies $\neg x \not \in M$. Suppose that $x, \neg
x \not \in M$. Let $M_1= M\cup [x)$. Clearly $M_1$ is an increasing
set. We will show that if $y\in M_1$ then $\neg y \not \in M_1$. If
$y\in M_1$ we have to consider two cases:

{\it case 1: $y \in M$}. In this case $\neg y \not \in M$. If $\neg
y \in [x)$ then $x \leq \neg y$, $y \leq \neg x$ and $\neg x \in M$
which is a contradiction. Thus $\neg y \not \in M_1$.

{\it case 2: $y \in [x)$}. Then $x\leq y$ and $\neg y \not \in [x)$.
Moreover $\neg y \leq \neg x$. If $\neg y \in M$ then $\neg x \in M$
which is a contradiction. Thus $\neg y \not \in M \cup [x)$.

Hence $\neg y \not \in M_1$. Since $M$ is a maximal increasing set
respect to the property $x\in M$ implies $\neg x \not \in M$, we
have that $M = M_1 = M\cup [x)$ which is a contradiction since $x,
\neg x \not \in M$. This proves that, if $\neg x \not \in  M$ then
$x\in M$. Thus $M$ satisfies the property $x\in M$ iff $\neg x \not
\in M$ and then $M$ is a prime increasing subset of $B$ containing
$a$.

\qed
\end{proof}

\begin{prop}\label{STATE1}
Let $B$ be a Boolean algebra and $\sigma$ be Boolean pre-state on
$B$. Then the map $\sigma \mapsto \sigma^{-1}(1) = \{x\in B:
\sigma(x) = 1 \}$ is a one-to-one correspondence between Boolean
pre-states on $B$ and prime increasing subset of $B$.
\end{prop}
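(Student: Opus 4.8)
The plan is to exploit the fact that any $\{0,1\}$-valued function $\sigma : B \to \{0,1\}$ is completely determined by the set $\sigma^{-1}(1)$, so that constructing the inverse correspondence amounts to assigning to each prime increasing set $M$ its characteristic function $\chi_M$, defined by $\chi_M(x) = 1$ if $x \in M$ and $\chi_M(x) = 0$ otherwise. The core of the argument is then to check that the two defining axioms of a Boolean pre-state translate, under this identification, precisely into the two defining conditions of a prime increasing set. No appeal to Proposition \ref{STATE0} is needed here; the whole statement is a direct dictionary between functions and sets.

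First I would verify that the assignment $\sigma \mapsto \sigma^{-1}(1)$ is well defined, i.e. that $M := \sigma^{-1}(1)$ is a prime increasing subset whenever $\sigma$ is a Boolean pre-state. That $M$ is increasing follows from monotonicity: if $x \in M$ and $x \leq y$, then $1 = \sigma(x) \leq \sigma(y)$, forcing $y \in M$. Primeness follows from the complementation axiom, reading off the chain $x \in M$ iff $\sigma(x) = 1$ iff $1 - \sigma(\neg x) = 1$ iff $\sigma(\neg x) = 0$ iff $\neg x \notin M$.

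Conversely, given a prime increasing set $M$, I would show that $\chi_M$ is a Boolean pre-state. The axiom $\chi_M(\neg x) = 1 - \chi_M(x)$ is exactly a restatement of the primeness condition $\neg x \in M$ iff $x \notin M$, checked on the two possible values of $\chi_M(x)$; and monotonicity of $\chi_M$ is exactly the statement that $M$ is increasing, since when $\chi_M(x) = 1$ one uses $x \leq y$ together with $x \in M$ to conclude $y \in M$, while when $\chi_M(x) = 0$ the inequality $\chi_M(x) \leq \chi_M(y)$ is trivial.

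Finally I would confirm that the two assignments are mutually inverse. On one side, $\chi_M^{-1}(1) = M$ directly from the definition of $\chi_M$. On the other side, for a Boolean pre-state $\sigma$ with $M = \sigma^{-1}(1)$, the functions $\chi_M$ and $\sigma$ agree on the set where they take the value $1$ and, because both are $\{0,1\}$-valued, therefore also on its complement, so $\chi_M = \sigma$. I do not expect any genuine obstacle, as the proof is a transcription of the axioms into set-theoretic language; the only point demanding care is to invoke two-valuedness at the last step, so that agreement of the $1$-fibers upgrades to equality of the functions themselves.
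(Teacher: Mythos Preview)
Your proof is correct and follows essentially the same approach as the paper: both verify that $\sigma^{-1}(1)$ is prime and increasing directly from the two Boolean pre-state axioms, and both construct the inverse by assigning to $M$ its characteristic function. The only difference is cosmetic: the paper asserts injectivity directly and says ``it is not very hard to see'' that $\sigma_M$ is a Boolean pre-state, whereas you spell out these verifications and the mutual-inverse check explicitly.
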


\begin{proof}
Since $\sigma$ is an order homomorphism then $\sigma^{-1}$ is an
increasing set. Moreover $x \in \sigma^{-1}(1)$ iff $\sigma(x)=1$ iff
$\sigma(\neg x)=0$ iff $\neg x \not \in \sigma^{-1}(1)$. Thus
$\sigma^{-1}(1)$ is prime increasing subset of $B$. By definition,
the map $\sigma \mapsto \sigma^{-1}(1)$ is injective. We prove the
surjectivity. Let $M$ be a prime increasing subset of $B$. If we
consider the function $$ \sigma_M (x) = \cases {1, & if $x \in M $
\cr 0 , & otherwise \cr}$$ it is not very hard to see that $\sigma_M$ is Boolean pre-state and $\sigma_M^{-1}(1) =
M$. Hence the map is surjective.

\qed
\end{proof}

\begin{prop}\label{STATE2}
Let $L$ be a $IE_B$-lattice. Then there exists a Boolean pre-state $\sigma:L
\rightarrow \{0,1\}$ such that $\sigma(x) = 1$ iff $\sigma(s(x))=1$.
\end{prop}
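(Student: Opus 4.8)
The plan is to exhibit a concrete Boolean pre-state on $L$ by pulling one back from the Boolean algebra $s(L)$, using the machinery of the previous three propositions. By Proposition~\ref{E1}-1, $\langle s(L), \lor, \land, \neg, 0, 1 \rangle$ is a Boolean sublattice of the center $Z(L)$. Assuming $L$ is nontrivial so that $1 > 0$ in $s(L)$, Proposition~\ref{STATE0} applied to the Boolean algebra $s(L)$ (with $a = 1$) yields a prime increasing subset $M \subseteq s(L)$, and Proposition~\ref{STATE1} converts $M$ into a Boolean pre-state $\tau$ on $s(L)$ with $\tau^{-1}(1) = M$. The heart of the argument is then to lift $\tau$ to a Boolean pre-state $\sigma$ on all of $L$ along the internal operation $s$.

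The natural definition to try is $\sigma(x) = \tau(s(x))$ for each $x \in L$, i.e. $\sigma(x) = 1$ iff $s(x) \in M$. First I would verify that this $\sigma$ is genuinely a Boolean pre-state, checking the two defining axioms. For the order-preservation clause, if $x \leq y$ then $s(x) \leq s(y)$ by Proposition~\ref{E1}-2, and since $\tau$ is order-preserving on $s(L)$ we get $\sigma(x) = \tau(s(x)) \leq \tau(s(y)) = \sigma(y)$. For the complementation clause, $\sigma(\neg x) = \tau(s(\neg x)) = \tau(\neg s(x)) = 1 - \tau(s(x)) = 1 - \sigma(x)$, where the second equality is Axiom~s2 and the third uses that $\tau$ is a Boolean pre-state. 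So $\sigma$ is indeed a Boolean pre-state on $L$.

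It then remains to check the required identity $\sigma(x) = 1$ iff $\sigma(s(x)) = 1$. Here the key fact is the idempotency $s(s(x)) = s(x)$ from Proposition~\ref{E1}-4. Directly, $\sigma(s(x)) = \tau(s(s(x))) = \tau(s(x)) = \sigma(x)$, so the two are not merely equivalent but actually equal, which in particular gives the stated biconditional. I expect the main obstacle to be the nontriviality hypothesis needed to invoke Proposition~\ref{STATE0}: that result produces a prime increasing set containing a given element $a > 0$, so one needs $0 \neq 1$ in $s(L)$, i.e. $L$ nontrivial. The trivial case is handled separately since the statement is vacuous or immediate there. Beyond this degenerate case, every step is a routine invocation of an already-established property of $s$, so the real content is simply the observation that composing a prime-set pre-state on the Boolean center $s(L)$ with the idempotent operation $s$ yields a pre-state on $L$ compatible with $s$ in the sense demanded.
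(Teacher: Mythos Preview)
Your proof is correct and follows essentially the same route as the paper: obtain a prime increasing subset $M$ of the Boolean algebra $s(L)$ via Proposition~\ref{STATE0}, turn it into a Boolean pre-state $\tau$ (the paper's $\varphi_M$) on $s(L)$ via Proposition~\ref{STATE1}, and then set $\sigma = \tau \circ s$, checking the pre-state axioms from s2 and Proposition~\ref{E1}-2 and the coherence condition from the idempotency $s(s(x)) = s(x)$. Your explicit treatment of the nontriviality hypothesis needed to invoke Proposition~\ref{STATE0} is a point the paper passes over in silence.
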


\begin{proof}
By Proposition \ref{STATE0}, there exists a prime increasing subset $M$ of $s(L)$.
By Proposition \ref{STATE1}, let $\varphi_M: s(L) \rightarrow \{0,1\} $ be the
Boolean pre-state associated to $M$. Define the composition $\sigma_M: L
\stackrel{s}{\rightarrow} s(L) \stackrel{\varphi_M}{\rightarrow} \{0,1\}$. Clearly
$\sigma_M$ is an order homomorphism and note that $\sigma_M(\neg x) = \varphi_M (s(\neg x)) = \varphi_M (\neg s(x)) = 1-\varphi_M (s(x))$. Hence $\sigma_M$ is a Boolean pre-state on $L$. By Proposition \ref{E1}-4, $\sigma_M(x)=1$ iff $1= \varphi_M (s(x)) = \varphi_M
s(s(x)) = \sigma_M(s(x))$.

\qed
\end{proof}

The last proposition motivates the following concept:

\begin{definition}
{\rm Let $L$ be a $IE_B$-lattice and $\sigma$ be a Boolean pre-state
on $L$. Then $s, \sigma$ are {\it coherent} whenever they satisfy:
$\sigma(x) = 1$ iff $\sigma(s(x))=1$. }
\end{definition}

Proposition \ref{STATE2} allows to build a coherent Boolean
pre-state for each possible prime increasing set in $s(L)$. Our main
interest is to tell exactly if all possible Boolean pre-states in
$L$, coherent with $s$, come from a prime increasing set in $s(L)$.
In order to do this, we extend the concept of prime increasing
subset to the $IE_B$-lattices in the following manner:

\begin{definition}
{\rm Let $L$ be a $IE_B$-lattice. A {\it Boolean pre-state filter} ({\it
$bps$-filter} for short) is a non-empty subset $F$ of $L$ such that

\begin{enumerate}
\item
$F$ is an increasing set such that $s(F) \subseteq F$,

\item
$x \in F$ iff $\neg x \not \in F$

\end{enumerate}
We denote by $Filt_{bps}$ the set of all $bps$-filters. }
\end{definition}

\begin{lem}\label{WF1}
Let $L$ be a $IE_B$-lattice and $F$ be a $bps$-filter. Then $s(F)$ is a prime
increasing subset in $s(L)$.
\end{lem}

\begin{proof} Let $a \in s(F)$ and $x\in s(L)$ such that $a\leq x$. By definition
of $bps$-filter, $s(F) \subseteq F$ and then $a\in F$. Since $F$ is
an increasing set, $x\in F$.

By Proposition \ref{E1}-5 $, x = s(x) \in s(F)$ and then $s(F)$ is
an increasing set in $s(L)$. Let $x\in s(L)$. Since $x = s(x)$ and
$F$ is closed by $s$, we have: $x\in s(F)$ iff $x\in F$ iff $\neg x
\not \in F$ iff $\neg x \not \in s(F)$. Hence $s(F)$ is a prime
increasing subset in $s(L)$.

\qed
\end{proof}

\begin{prop}\label{EXT0}
Let $L$ be a $IE_B$-lattice and $M$ be a  prime increasing subset in
$s(L)$. Then the map $M\mapsto F_M = \{x\in L: s(x)\in M \}$ is a
one-to-one correspondence between prime increasing subsets in $s(L)$
and $Filt_{bps}(L)$.
\end{prop}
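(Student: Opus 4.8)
The plan is to establish that $M \mapsto F_M$ is a well-defined map into $Filt_{bps}(L)$, that it is injective, and that it is surjective, the latter using Lemma~\ref{WF1} to produce the inverse.

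First I would verify that $F_M = \{x \in L : s(x) \in M\}$ is genuinely a $bps$-filter. For the increasing condition, if $x \in F_M$ and $x \leq y$, then by Proposition~\ref{E1}-2 we have $s(x) \leq s(y)$, and since $M$ is increasing in $s(L)$ with $s(x) \in M$, it follows that $s(y) \in M$, so $y \in F_M$. For closure under $s$, if $x \in F_M$ then $s(x) \in M$; applying Proposition~\ref{E1}-4 gives $s(s(x)) = s(x) \in M$, so $s(x) \in F_M$, hence $s(F_M) \subseteq F_M$. For the primeness condition, $x \in F_M$ iff $s(x) \in M$ iff $\neg s(x) \notin M$ (since $M$ is prime in $s(L)$ and $\neg s(x) \in s(L)$) iff $s(\neg x) \notin M$ (by Axiom s2) iff $\neg x \notin F_M$. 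Non-emptiness follows since $s(1) = 1 \in M$ forces $1 \in F_M$.

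Next I would show injectivity: if $M_1 \neq M_2$ are prime increasing subsets of $s(L)$, then $F_{M_1} \neq F_{M_2}$. Suppose without loss of generality $a \in M_1 \setminus M_2$ with $a \in s(L)$. Then $s(a) = a \in M_1$ by Proposition~\ref{E1}-5, so $a \in F_{M_1}$, while $s(a) = a \notin M_2$ gives $a \notin F_{M_2}$, establishing distinctness.

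Finally I would prove surjectivity, which is where Lemma~\ref{WF1} does the essential work. Given any $F \in Filt_{bps}(L)$, the lemma tells us $s(F)$ is a prime increasing subset of $s(L)$, so it is a legitimate input to the correspondence. The heart of the argument is to check that $F_{s(F)} = F$, i.e.\ that the constructed map recovers $F$. For this I would show $x \in F$ iff $s(x) \in s(F)$. The forward direction is immediate: $x \in F$ implies $s(x) \in s(F)$ by definition. For the converse, suppose $s(x) \in s(F)$; since $s(F) \subseteq F$ (as $F$ is a $bps$-filter), we get $s(x) \in F$. I expect the main obstacle to lie precisely here---concluding $x \in F$ from $s(x) \in F$. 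The cleanest route is via the primeness condition: if $x \notin F$, then $\neg x \in F$, whence $s(\neg x) = \neg s(x) \in s(F) \subseteq F$, so both $s(x) \in F$ and $\neg s(x) \in F$, contradicting the primeness of $F$ (which forbids an element and its complement from both lying in $F$). Hence $x \in F$, giving $F_{s(F)} = F$ and completing surjectivity. Together with injectivity and well-definedness, this yields the claimed one-to-one correspondence. \qed
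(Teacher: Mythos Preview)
Your proof is correct and follows essentially the same approach as the paper's: both verify well-definedness via Proposition~\ref{E1}, handle injectivity directly, and establish surjectivity by invoking Lemma~\ref{WF1} and showing $F_{s(F)} = F$ through the same contradiction argument. The only cosmetic difference is that in the surjectivity step you derive the contradiction inside $F$ (using its primeness) whereas the paper derives it inside $s(F)$; both are equally valid.
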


\begin{proof}
By Proposition \ref{E1}-2, $F_M$ is an increasing set. For each
$x\in F_M$, $s(x)\in M$ and then $s(x) = s(s(x)) \in M$. Thus
$s(x)\in F_M$ and $F_M$ is closed by $s$. Let $x\in L$. Then $x\in
F_M$ iff $s(x) \in M$ iff $\neg s(x) \not \in M$ iff $\neg x \not
\in F_M$. Hence $F_M \in Filt_{bps}(L)$. By definition it is not
very hard to see that the map $M\mapsto F_M = \{x\in L: s(x)\in M
\}$ is injective. We shall prove the surjectivity. Let $F \in
Filt_{bps}(L)$. By Lemma \ref{WF1}, $s(F)$ is a prime increasing
subset in $s(L)$. By the above result we can consider the
$bps$-filter $F_{s(F)}$. If $x\in F_{s(F)}$ then $s(x) \in s(F)$.
Note that if $x\not \in F$ then $\neg x \in F$ and $\neg s(x) \in
s(F)$ which is a contradiction. Therefore $x \in F$ and $F_{s(F)}
\subseteq F$. For the other inclusion, if $x\in F$ then $s(x) \in
s(F)$ and $x\in F_{s(F)}$. Thus $F \subseteq F_{s(F)}$. Hence $F =
F_{s(F)}$. These arguments prove that $M\mapsto F_M = \{x\in L:
s(x)\in M \}$ is a one-to-one correspondence between prime
increasing subsets in $s(L)$ and $Filt_{bps}(L)$.

\qed
\end{proof}

\begin{prop}\label{EXT1}
Let $L$ be a $IE_B$-lattice and $(\sigma_i)_i$ be the family of
Boolean pre-states on $L$ coherent with $s$. Then there exists a
one-to-one correspondence between $(\sigma_i)_i$ and $Filt_{bps}(L)$
given by the mapping $\sigma _i \mapsto \sigma_i^{-1}(1)$.
\end{prop}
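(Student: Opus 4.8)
The plan is to establish the stated bijection by exhibiting an explicit inverse, following closely the pattern of Proposition \ref{STATE1} but now keeping track of the extra structure coming from $s$. Concretely, there are three things to verify: that $\sigma^{-1}(1)$ genuinely lies in $Filt_{bps}(L)$ for every Boolean pre-state $\sigma$ coherent with $s$ (so the map is well defined), that the assignment is injective, and that the characteristic-function construction $F \mapsto \sigma_F$ produces a coherent Boolean pre-state and serves as a two-sided inverse.

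First I would check that $\sigma^{-1}(1)$ is a $bps$-filter. That it is an increasing set and satisfies $x \in \sigma^{-1}(1)$ iff $\neg x \notin \sigma^{-1}(1)$ is proved exactly as in Proposition \ref{STATE1}, using monotonicity of $\sigma$ and the identity $\sigma(\neg x) = 1 - \sigma(x)$; non-emptiness follows from $\sigma(1)=1$ (Proposition \ref{ORTHO}). The only genuinely new requirement is the closure $s(\sigma^{-1}(1)) \subseteq \sigma^{-1}(1)$, and this is precisely the forward direction of coherence: if $\sigma(x)=1$ then $\sigma(s(x))=1$, so $s(x) \in \sigma^{-1}(1)$. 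Injectivity is then immediate, since two $\{0,1\}$-valued functions with the same preimage of $1$ coincide.

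For surjectivity I would take an arbitrary $F \in Filt_{bps}(L)$ and define its characteristic function $\sigma_F(x) = 1$ if $x \in F$ and $\sigma_F(x) = 0$ otherwise. The two Boolean pre-state axioms drop out of the two defining clauses of a $bps$-filter: increasingness of $F$ gives monotonicity of $\sigma_F$, and the clause $x \in F$ iff $\neg x \notin F$ gives $\sigma_F(\neg x) = 1 - \sigma_F(x)$. By construction $\sigma_F^{-1}(1) = F$, so once coherence is established the two maps are mutually inverse.

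The main obstacle, and the one point requiring more than a one-line check, is verifying that $\sigma_F$ is coherent, i.e. $\sigma_F(x)=1$ iff $\sigma_F(s(x))=1$. The forward implication is built into the filter definition, since $x \in F$ forces $s(x) \in s(F) \subseteq F$. For the converse I would argue by contradiction: if $s(x) \in F$ but $x \notin F$, then $\neg x \in F$ by the primeness clause, so $s(\neg x) \in s(F) \subseteq F$; invoking axiom s2 to rewrite $s(\neg x) = \neg s(x)$ yields $\neg s(x) \in F$, whence $s(x) \notin F$, contradicting the hypothesis. This forces $x \in F$ and completes coherence, so that $F \mapsto \sigma_F$ inverts $\sigma \mapsto \sigma^{-1}(1)$ and the correspondence is one-to-one.
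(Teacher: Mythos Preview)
Your argument is correct. The well-definedness and injectivity parts match the paper essentially verbatim. The difference lies in surjectivity.

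The paper does not use the characteristic function of $F$ directly. Instead, given $F \in Filt_{bps}(L)$, it invokes Lemma~\ref{WF1} to conclude that $s(F)$ is a prime increasing subset of $s(L)$, and then reuses the construction of Proposition~\ref{STATE2}: the composite $\sigma_{s(F)} \colon L \xrightarrow{s} s(L) \xrightarrow{\varphi_{s(F)}} \{0,1\}$ is automatically a Boolean pre-state coherent with $s$, and the remaining work is to show $\sigma_{s(F)}^{-1}(1) = F$. Your route defines $\sigma_F$ as the characteristic function of $F$ and verifies coherence by hand, via the contradiction argument using $s(\neg x) = \neg s(x)$.

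The two inverses are in fact the same function (since $x \in F$ iff $s(x) \in s(F)$, by the argument in Proposition~\ref{EXT0}), so this is a difference of packaging rather than of substance. Your approach is more elementary and self-contained: it does not appeal to Lemma~\ref{WF1} or Proposition~\ref{STATE2}, and the coherence check is a clean two-line contrapositive. The paper's approach, on the other hand, makes explicit that every coherent Boolean pre-state factors through $s(L)$, which ties the result into the surrounding narrative (Propositions~\ref{STATE0}--\ref{EXT0}) identifying coherent pre-states with prime increasing subsets of $s(L)$.
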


\begin{proof}
We first prove that if $\sigma$ is a Boolean pre-state on $L$
coherent with $s$ then $\sigma^{-1}(1)$ is a $bps$-filter. Clearly
$\sigma^{-1}(1)$ is an increasing set. Since $\sigma$ is coherent
with $s$ then $\sigma^{-1}(1)$ is closed by $s$. $x \in
\sigma^{-1}(1)$ iff $\sigma(x) = 1$ iff $\sigma(\neg x) = 0$ iff
$\neg x \not \in \sigma^{-1}(1)$. Hence $\sigma^{-1}(1) \in
Filt_{bps}(L)$. Trivially the map $\sigma_i \mapsto \sigma_i^{-1}(1)$ is
injective. Then we have to prove the surjectivity.

Let $F \in Filt_{bps}(L)$. By Lemma \ref{WF1}, $s(F)$ is a prime increasing subset of
$s(L)$. With the same argument used in Proposition \ref{STATE2}, consider the
Boolean pre-state $\sigma_{s(F)}$ coherent with $s$ given by the composition
$L \stackrel{s}{\rightarrow} s(L) \stackrel{\varphi_{s(F)}}{\rightarrow} \{0,1\}$.
We have to prove that $F = \sigma_{s(F)}^{-1}(1)$. If $x\in \sigma_{s(F)}^{-1}(1)$
then $\varphi_{s(F)} (s(x)) = 1$. Therefore $s(x) \in s(F)$. Suppose that $x\not \in
F$. Since $F$ is a $bps$-filter, $\neg x \in F$ and $\neg s(x) \in s(F)$ which is a
contradiction since $s(F)$ is a prime increasing subset on $s(L)$. Thus $x\in F$ and
$\sigma_{s(F)}^{-1}(1)\subseteq F$. On the other hand, if $x\in F$ then, $s(x) \in s(F)$
and $\sigma_{s(F)}(x) = \varphi_{s(F)}(s(x)) = 1$. Thus $x\in \sigma_{s(F)}^{-1}(1)$ and $F
\subseteq \sigma_{s(F)}^{-1}(1)$.

\qed
\end{proof}

Thus, by Propositions \ref{EXT0} and \ref{EXT1}, for a
$IE_B$-lattice $L$, Boolean pre-states on $L$ are in one-to-one
correspondence with prime increasing sets in
$s(L)$. Moreover, we can built $IE_B$-lattices from an object in the
category ${\cal E}_B$. As we shall see in the following, this
construction is described by a functor.

\begin{prop}\label{FUNC0}
Let $L$ be an orthomodular lattice and $\sigma$ be a Boolean pre-state on $L$. If we
define ${\cal I}(L) = (L, s_{\sigma})$ such that $$ s_{\sigma}(x) = \cases {1^L, &
if $\sigma(x)=1$ \cr 0^L , & if $\sigma(0)=0$ \cr} $$ then:

\begin{enumerate}
\item
${\cal I}(L)$ is a $IE_B$-lattice and $s_{\sigma}$ is coherent with $\sigma$.

\item
If $\sigma(x\lor y) = \sigma(x) + \sigma(y)$ then $s_{\sigma}(x\lor y) =
s_{\sigma}(x) \lor s_{\sigma}(y)$.

\item
If $(L_1, \sigma_1) \stackrel{f}{\rightarrow} (L_2, \sigma_2) $ is a
${\cal E}_B$-homomorphism then $f:{\cal I}(L_1) \rightarrow {\cal
I}(L_2)$ is a $IE_B$-homomorphism.

\end{enumerate}

\end{prop}

\begin{proof}
1) We have to prove that $s_{\sigma}$ satisfies s1,...,s5. Clearly s1, s2 and s4 are
trivially satisfied. \hspace{0.1cm}  s3) If $\sigma(y)=1$ then, $1^L =
s_{\sigma}(x\lor 1^L) = s_{\sigma}(x\lor s_{\sigma}(y))$ and $s_{\sigma}(x) \lor
s_{\sigma}(y) = s_{\sigma}(x) \lor 1^L = 1^L$. If $\sigma(y) = 0$ then
$s_{\sigma}(x\lor s_{\sigma}(y)) = s_{\sigma}(x)$ and $s_{\sigma}(x)\lor
s_{\sigma}(y) = s_{\sigma}(x)$. \hspace{0.1cm}  s5) If $\sigma(x\land y) = 0$ then
$s_{\sigma}(x\land y) = 0$ and $0^L = s_{\sigma}(x\land y) \leq s_{\sigma}(x) \land
s_{\sigma}(y)$. Suppose that $\sigma(x\land y) = 1$. Since $\sigma$ is monotone
$\sigma(x) = \sigma(y) = 1$. Thus $s_\sigma(x\land y) = s_{\sigma}(x) \land
s_{\sigma}(y)$. Hence $L$ with the operation $s_{\sigma}$ is a $IE_B$-lattice. Note that
$\sigma(x) = 1$ iff $s_{\sigma}(x) = 1^L$ iff $\sigma(s_{\sigma}(x)) = 1$ and then
$s_{\sigma}$ is coherent with $\sigma$.

2) Suppose that $\sigma(x) = 1$. Then $1^L = s_{\sigma}(x) \leq
s_{\sigma}(x) \lor s_{\sigma}(y)$. Since $x\leq x\lor y$, we have
$\sigma(x\lor y) = 1$ and $s_{\sigma}(x\lor y) = 1^L$. Thus
$s_{\sigma}(x\lor y) = s_{\sigma}(x) \lor s_{\sigma}(y) = 1^L$. The
case $\sigma(y) = 1$ is analogous. Suppose that $\sigma(x) =
\sigma(y) = 0$. Then $s_\sigma(x) \lor s_\sigma(y) = 0^L$. Moreover
$\sigma(x\lor y) = \sigma(x) + \sigma (y) = 0$ and $0 ^L =
s_\sigma(x\lor y)$. Thus $s_{\sigma}(x\lor y) = s_{\sigma}(x) \lor
s_{\sigma}(y) = 0^L$.

3) Let $(L_1, \sigma_1) \stackrel{f}{\rightarrow} (L_2, \sigma_2) $
is a ${\cal E}_B$-homomorphism. Suppose that $\sigma_1(x) = 1$. Then
$f(s_{\sigma_1}(x)) = f(1^{L_1}) = 1^{L_2}$. Since $\sigma_2 \circ f =
\sigma_1$, $\sigma_2(f(x)) = 1$ and then $s_{\sigma_2}(f(x)) =
1^{L_2}$. An analogous result can be obtained when we consider the
case $\sigma_1(x) = 0$. Hence $f(s_{\sigma_1}(x)) = s_{\sigma_2}(f(x))$.

\qed
\end{proof}

\noindent By Proposition \ref{FUNC0} we can see that: $${\cal I}:
{\cal E}_B \rightarrow {\cal IE}_B$$ such that ${\cal E}_B \ni
(L,\sigma) \mapsto {\cal I}(L,\sigma) = (L, s_{\sigma})$ and ${\cal
I}(f) = f$ for each ${\cal IE}_B$-homomorphisms $f$, is a functor.

\section{Equational characterization for subclasses of ${\cal E}_B$}

Boolean pre-states are external maps with respect to the
orthomodular structure in the sense that they are not closed in the
domain of definition. However, a closer look shows that the
equational system of ${\cal IE}_B$ allows to represent the basic
properties that define these maps by adding an operation  to the
orthomodular structure. Let ${\cal A}$ be a subcategory of ${\cal
E}_B$. To find this operation, we propose to search for a subvariety
${\cal A}_I$ of ${\cal IE}_B$ and a subclass ${\cal D}$ of ${\cal
A}_I$ whose algebras are univocally determined by the objects of
${\cal A}$ and then to see that the valid equations in ${\cal A}_I$
are determined by the subclass ${\cal D}$. This motivates the
following definition:

\begin{definition}
{\rm Let ${\cal A}$ be a subcategory of ${\cal E}_B$. A subvariety
${\cal A}_I$ of ${\cal IE}_B$ {\it equationally characterizes}
${\cal A}$ iff there exists a subclass ${\cal D}$ of ${\cal A}_I$
such that:

\begin{enumerate}

\item
${\cal D}$ is categorically equivalent to ${\cal A}$,

\item
$\models_{{\cal A}_I} t = 1$ iff  $\models_{{\cal D}} t=1$.

\end{enumerate}
}
\end{definition}

By an argument of universal algebra, for each subcategory ${\cal A}$
of ${\cal E}_B$, it is always possible to obtain  a subvariety
${\cal A}_I$ of ${\cal IE}_B$ that equationally characterizes ${\cal
A}$. In fact: we first consider the class ${\cal D} = \{{\cal I}(A):
A \in {\cal A}\} $ that in turn allows to locally invert the functor
${\cal I}$ in ${\cal D}$, i.e. ${\cal I}: {\cal A} \rightarrow {\cal
D}$ determines a categorical equivalence. Let $ {\cal A}_I ={\cal
V}({\cal D})$ be the subvariety of ${\cal IE}_B$ generated by ${\cal
D}$. Then $\models_{{\cal A}_I} t = 1$ iff $\models_{{\cal D}} t=1$.

Clearly this construction does not seem very attractive because it
would not give, in principle, any information about the equational
system that defines the subvariety ${\cal A}_I$. Our proposal is to
give arguments that allow to determine in the simplest form the
equations that define ${\cal A}_I$ and the subclass ${\cal D}$. For
this purpose, we need to characterize the direct indecomposable
algebras in any subvariety of ${\cal IE}_B$ and the following
preview results:

\begin{prop}\label{INTERV}
Let $L$ be a $IE_B$-lattice, $a\in s(L)$ and $L_a = [0,a]$. If we consider the restriction
$s\upharpoonright_{[0,a]}$ then $(L_a, s\upharpoonright_{[0,a]})$ is a
$IE_B$-lattice and $s(L_a) = L_a \cap s(L) \subseteq Z(L_a)$.
\end{prop}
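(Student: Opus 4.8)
The plan is to handle the three assertions in turn: first establish that the restriction is well defined, then verify the five axioms s1--s5 for $s\upharpoonright_{[0,a]}$ inside the orthomodular lattice $L_a=\langle [0,a],\land,\lor,\neg_a,0,a\rangle$, and finally read off the identification of the image together with its centrality.

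First I would record the two facts about $a$ that drive everything: since $a\in s(L)$, Proposition \ref{E1}-5 gives $s(a)=a$, and Proposition \ref{E1}-1 gives $a\in Z(L)$. Well-definedness of the restriction then follows from monotonicity (Proposition \ref{E1}-2): if $x\leq a$ then $s(x)\leq s(a)=a$, so $s$ maps $[0,a]$ into $[0,a]$. I would also note that for $x,y\leq a$ the meet $x\land y$ and the join $x\lor y$ computed in $L$ already lie in $[0,a]$, so the lattice operations of $L_a$ coincide with those of $L$; only the orthocomplement changes, to $\neg_a x=\neg x\land a$, with top element $a$.

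Next I would check the axioms. Axiom s1 is immediate from $s(a)=a$. Axioms s3 and s5 transfer verbatim from the corresponding axioms in $L$, since they involve only $\lor$, $\land$ and $s$, all of which agree with their $L$-counterparts on $[0,a]$. For s4 I would use $y\leq a$ to absorb the relative complement: $y\land\neg_a s(x)=y\land(\neg s(x)\land a)=y\land\neg s(x)$, so the identity $y=(y\land s(x))\lor(y\land\neg_a s(x))$ collapses to s4 in $L$. The hard part will be axiom s2, precisely because the complement in $L_a$ is the relative one; here I would compute $s(\neg_a x)=s(\neg x\land a)=s(\neg x\land s(a))$ and apply Proposition \ref{E1}-6 to obtain $s(\neg x)\land s(a)=\neg s(x)\land a=\neg_a s(x)$. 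This is exactly the step where the centrality of $a$, used through $a=s(a)$, is indispensable. With s1--s5 verified, $(L_a,s\upharpoonright_{[0,a]})$ is an $IE_B$-lattice.

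Finally I would settle the image. The inclusion $s(L_a)\subseteq L_a\cap s(L)$ is clear: every $s(x)$ with $x\leq a$ lies in $s(L)$ and, as above, satisfies $s(x)\leq a$. Conversely, if $b\in L_a\cap s(L)$ then $b\leq a$ and, by Proposition \ref{E1}-5, $b=s(b)=s\upharpoonright_{[0,a]}(b)\in s(L_a)$, giving equality. The containment $s(L_a)\subseteq Z(L_a)$ then follows at once by applying Proposition \ref{E1}-1 to the $IE_B$-lattice $L_a$ established in the previous step.
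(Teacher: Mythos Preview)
Your proof is correct and follows essentially the same route as the paper: closure of $s$ on $[0,a]$ via Proposition~\ref{E1}-2 and $s(a)=a$, verification of s1--s5 (with the key step for s2 using Proposition~\ref{E1}-6 and the absorption $y\land a=y$ for s4), and the centrality of the image via Proposition~\ref{E1}-1 (equivalently, Proposition~\ref{eqcentro} together with s4). Your version is simply more explicit about the two inclusions giving $s(L_a)=L_a\cap s(L)$, which the paper asserts directly.
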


\begin{proof}
As  is mentioned in the basic notions, $L_a$ is an orthomodular
lattice. By Proposition \ref{E1}-2 $s$ is closed in $L_a$ and then
$s(L_a) = L_a \cap s(L)$. \hspace{0.1cm} s3 and s5) follow from the
fact that $s$ is closed in $L_a$. \hspace{0.1cm} s1) By Proposition
\ref{E1}-5, $s(a)=a$.  \hspace{0.1cm} s2) By Proposition \ref{E1}-6
$s(\neg_a x) = s(\neg x \land a) = s(\neg x \land s(a)) = \neg s(x)
\land a = \neg_a s(x)$. \hspace{0.1cm} s4) Let $x,y \in L_a$. Then
$(y\land s(x)) \lor (y \land \neg_a s(x)) = (y\land s(x)) \lor (y
\land a \land \neg s(x)) = (y\land s(x)) \lor (y \land \neg s(x)) =
y$. By Theorem \ref{eqcentro} and s4 it follows that $s(L_a)
\subseteq Z(L_a)$.

\qed
\end{proof}

\begin{prop}\label{VALINTER}
Let $L$ be an $IE_B$-lattice and let $a, b\in s(L)$ such that $a<b$.
If $v_a: Term_{{\cal IE}_B} \rightarrow L_a$ is a valuation then
there exists a valuation $v_b: Term_{{\cal IE}_B} \rightarrow L_b$
such that $v_a(t) = a \land v_b(t)$.
\end{prop}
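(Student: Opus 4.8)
The plan is to define $v_b$ directly in terms of $v_a$ and then verify the identity $v_a(t)=a\wedge v_b(t)$ by induction on the complexity $Comp(t)$. First I would set up the natural map connecting the two intervals. Since $a<b$ and both lie in $s(L)\subseteq Z(L)$, the element $a$ is central, so relativizing to $L_b=[0,b]$ the map $w\mapsto a\wedge w$ is a lattice-with-involution homomorphism (because $a$ is central in $L_b$, one has $a\wedge\neg_b w=a\wedge b\wedge\neg w=a\wedge\neg w=\neg_a(a\wedge w)$, and $a\wedge(w_1\vee w_2)=(a\wedge w_1)\vee(a\wedge w_2)$ by centrality). Crucially, by Proposition \ref{INTERV} the operation $s$ is closed in both $L_a$ and $L_b$, and by Proposition \ref{E1} it is already the restriction of the same global $s$, so it commutes with relativization as well. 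Accordingly, on generators I would define $v_b(x_i)\in L_b$ to be any element with $a\wedge v_b(x_i)=v_a(x_i)$; the obvious choice is $v_b(x_i)=v_a(x_i)\vee(\neg a\wedge b)$, which lies in $L_b$ and satisfies $a\wedge v_b(x_i)=v_a(x_i)$ since $a\wedge\neg a=0$ and $a\wedge v_a(x_i)=v_a(x_i)$ (because $v_a(x_i)\le a$).

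The inductive step then reduces to checking that $a\wedge(-)$ intertwines each of the fundamental operations $\wedge,\vee,\neg,s$ of the two relativized $IE_B$-lattices $L_a$ and $L_b$. For the lattice operations and the involution this is exactly the centrality computation sketched above, carried out one operation at a time:
\begin{eqnarray*}
a\wedge(w_1\wedge_b w_2) &=& (a\wedge w_1)\wedge(a\wedge w_2),\\
a\wedge(w_1\vee w_2) &=& (a\wedge w_1)\vee(a\wedge w_2),\\
a\wedge(\neg_b w) &=& \neg_a(a\wedge w).
\end{eqnarray*}
The constants are handled by $a\wedge b=a$ (so $1^{L_b}\mapsto 1^{L_a}$) and $a\wedge 0=0$. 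For the unary operation $s$, using Proposition \ref{E1}-6 one gets $s(a\wedge w)=a\wedge s(w)$, since $a=s(a)\in s(L)$; hence $a\wedge s(w)=s(a\wedge w)$, which is precisely the statement that $a\wedge(-)$ commutes with $s$. Granting these four commutation identities, the equality $v_a(t)=a\wedge v_b(t)$ propagates from the generators through every term by the homomorphism property of valuations.

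The main obstacle I anticipate is the interaction of the operation $s$ with the relativization, not the lattice part: one must be careful that $s$ computed inside $L_b$ agrees with the global $s$ and that Proposition \ref{E1}-6 applies with the central element $a$ in the role of $s(y)$. This is exactly where membership $a\in s(L)$ (rather than merely $a\in Z(L)$) is used, via $s(a)=a$ from Proposition \ref{E1}-5. Once that identity $s(a\wedge w)=a\wedge s(w)$ is secured, everything else is the routine centrality bookkeeping above, and the induction closes.
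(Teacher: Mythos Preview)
Your proposal is correct and follows essentially the same route as the paper: define $v_b$ on generators and propagate $v_a(t)=a\wedge v_b(t)$ by induction on $Comp(t)$, using centrality of $a$ for the lattice connectives and Proposition~\ref{E1}-5,6 for the $s$ case. The only cosmetic difference is your choice on generators: the paper simply sets $v_b(x)=v_a(x)$ (which already lies in $L_b$ since $v_a(x)\le a<b$), whereas you use the more elaborate $v_b(x_i)=v_a(x_i)\vee(\neg a\wedge b)$; both satisfy $a\wedge v_b(x_i)=v_a(x_i)$, so nothing is gained or lost.
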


\begin{proof}
We define $v_b: Term \rightarrow L_b$ as follows: $v_b(0) = 0$,
$v_b(1) = b$, and $v_b(x) = v_a(x)$ for each variable $x$. We use
induction on the complexity of terms. If $Comp(t) = 0$ (i.e. $t$ is
a variable) the proof is trivial. Suppose that the Proposition holds
for $Comp(t) <n$. Let $t \in Term$ such that $Comp(t) = n$. If $t$
is $\neg u$ then, $Comp(u)<n$ and we have that $v_a(t) = v_a(\neg u)
= \neg_a v_a(u) = \neg_a v_b(u) = a \land \neg v_b(u) = a \land (b
\land \neg v_b(u)) = a\land \neg_b v_b(u) = a \land v_b(\neg u) =
a\land v_b(t)$. Suppose that $t$ is $s(u)$. Since $Comp(u)<n$, $s$
is closed in $L_a$ and $a\in s(L)$, By Proposition \ref{E1}-5 we
have that: $v_a(t) = v_a(s(u)) = s(v_a (u)) = s (a \land v_b(u)) =
s(a) \land s(v_b(u)) = a \land v_b(s(u)) = a \land v_b(t)$. If $t$
is $u_1 \land u_2$, $v_a(t) = v_a(u_1 \land u_2) = v_a(u_1) \land
v_a(u_2) = (a\land v_b(u_1)) \land (a\land v_b(u_2)) = a\land
v_b(u_1 \land u_2) = a\land v_b(t)$.

\qed
\end{proof}

\begin{prop}\label{EC3}
Let $L$ be an $IE_B$-lattice and $a,b \in s(L)$ such that $a<b$.
Then we have: $$ \models_{L_b} t=r \hspace{0.2cm} \Longrightarrow
\hspace{0.2cm} \models_{L_a} t=r $$
\end{prop}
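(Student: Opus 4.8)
The plan is to pull an arbitrary valuation on the smaller interval up to the larger one by means of Proposition \ref{VALINTER}, and then to read off the desired equality directly from the equation that holds in $L_b$. First I would record, via Proposition \ref{INTERV}, that both $L_a$ and $L_b$ are themselves $IE_B$-lattices under the restricted operation $s$, so that the relations $\models_{L_a}$ and $\models_{L_b}$ are meaningful and Proposition \ref{VALINTER} is applicable to the pair $a < b$ of elements of $s(L)$.

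To prove $\models_{L_a} t = r$ I would fix an arbitrary valuation $v_a : Term_{{\cal IE}_B} \rightarrow L_a$ and show that $v_a(t) = v_a(r)$. By Proposition \ref{VALINTER} there is a valuation $v_b : Term_{{\cal IE}_B} \rightarrow L_b$ such that $v_a(u) = a \land v_b(u)$ for every term $u$. Since the hypothesis $\models_{L_b} t = r$ holds for \emph{all} valuations in $L_b$, it holds in particular for this $v_b$, so that $v_b(t) = v_b(r)$. Applying the identity $v_a(u) = a \land v_b(u)$ with $u = t$ and with $u = r$ then gives
$$ v_a(t) = a \land v_b(t) = a \land v_b(r) = v_a(r). $$
Because $v_a$ was arbitrary, this is precisely $\models_{L_a} t = r$.

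The argument is short because essentially all of the work has already been packaged into Proposition \ref{VALINTER}; the one point to keep an eye on is that its inductive proof produces a \emph{single} valuation $v_b$ satisfying $v_a(u) = a \land v_b(u)$ simultaneously for every term $u$, so that the identity may legitimately be applied to both sides $t$ and $r$ of the equation at once. In particular no reduction to equations of the form $t = 1$ in the sense of Remark \ref{EQOML} is required here, and the unary operation $s$ needs no separate analysis, since it is already accounted for in the construction of $v_b$.
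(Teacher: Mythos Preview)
Your proof is correct and follows essentially the same strategy as the paper: lift an arbitrary $L_a$-valuation to an $L_b$-valuation via Proposition~\ref{VALINTER} and conclude. The only difference is cosmetic: the paper first invokes Remark~\ref{EQOML} to reduce to equations of the form $t=1$ and then computes $v_a(t)=a\land v_b(t)=a\land 1^{L_b}=a=1^{L_a}$, whereas you work with a general equation $t=r$ directly and apply $v_a(\cdot)=a\land v_b(\cdot)$ to both sides. Since, as you observe, Proposition~\ref{VALINTER} produces a single $v_b$ valid for all terms simultaneously, your shortcut is legitimate and the normalization step is indeed unnecessary.
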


\begin{proof}
By the characterization of equations in ${\cal OML}$, we study
equations of the form $t= 1$. Suppose that $L_b \models t = 1$. Let
$v_a$ be a $L_a$-valuation. By Proposition \ref{VALINTER} there
exists an $L_b$-valuation $v_b$ such that $v_a(\cdot ) = a\land
v_b(\cdot)$. Thus $v_a(t) = a\land v_b(t) = a\land 1^{L_b} = a \land
b = a = 1^{L_a}$. Hence $L_a \models t = 1$.

\qed
\end{proof}

Proposition \ref{EC3} gives the following useful result:
when we consider an arbitrary subvariety ${\cal A}_I$ of ${\cal IE}_B$, any interval
structure considered in an algebra of ${\cal A}_I$ lies in ${\cal A}_I$. \\

Let $L$ be an orthomodular lattice. It is well known that the map
given by $Z(L) \ni z \mapsto {\theta}_z = \{(a,b) \in L^2: a\land z
= b\land z\}$ is a Boolean isomorphism between $Z(L)$ and the
Boolean subalgebra of $Con_{OML}(L)$ of factor congruences. The
correspondence $x/{\theta}_z \mapsto x\land z$ defines an
$OML$-isomorphism from $L/{\theta}_z$ onto $L_z$ and then $x \mapsto
(x\land z , x\land \neg z)$ defines an $OML$-isomorphism from $L$
onto $L_z \times L_{\neg z}$. In what follows we shall establish
analogous results for $IE_B$-lattices.

\begin{prop}\label{PROD}
Let ${\cal A}_I$ be a subvariety of ${\cal IE}_B$. Let $L$ be an
algebra in ${\cal A}_I$, $z \in s(L)$ and we define the set
${\theta}_z = \{(a,b) \in L^2: a\land z = b\land z\}$. Then we have:

\begin{enumerate}
\item
$\theta_Z \in Con_{{\cal A}_I}(L)$ and $x/{\theta}_z \mapsto x\land z$
define a ${\cal A_I}$-isomorphism from $L/{\theta}_z$ onto $L_z$.

\item
$(\theta_z, \theta_{\neg z})$ is a pair of factor congruences on $L$,

\item
The map $s(L) \ni z \mapsto {\theta}_z = \{(a,b) \in L^2: a\land z =
b\land z\}$ is a Boolean isomorphism between $s(L)$ and the Boolean
subalgebra of $Con_{{\cal A}_I}(L)$ of factor congruences.

\end{enumerate}
\end{prop}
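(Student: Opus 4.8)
The plan is to bootstrap from the orthomodular facts recalled just before the statement, where everything is already known for the reduct $\langle L, \lor, \land, \neg, 0, 1\rangle$, and to supply only the verifications involving the operation $s$. Throughout I will use that $z \in s(L)$ implies $z \in Z(L)$ (Proposition \ref{E1}-1) and $s(z) = z$ (Proposition \ref{E1}-5), together with the identity $s(x \land s(y)) = s(x) \land s(y)$ (Proposition \ref{E1}-6).

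For item 1, since $z$ is central, $\theta_z$ is already an $OML$-congruence and $x/\theta_z \mapsto x \land z$ is an $OML$-isomorphism onto $L_z$ by the recalled result; moreover $L_z \in {\cal A}_I$ by the remark following Proposition \ref{EC3}. It remains to check $s$-compatibility and that the isomorphism commutes with $s$, and both reduce to the same computation: if $a \land z = b \land z$ then, using $z = s(z)$ and Proposition \ref{E1}-6, $s(a) \land z = s(a \land s(z)) = s(a \land z) = s(b \land z) = s(b) \land z$, so $(s(a), s(b)) \in \theta_z$; and for the map, $s(x) \land z = s(x \land s(z)) = s(x \land z)$ shows it carries the operation $s$ on $L/\theta_z$ to the restricted $s$ on $L_z$.

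For item 2, I would first note that $\neg z \in s(L)$ as well, so by item 1 both $\theta_z$ and $\theta_{\neg z}$ lie in $Con_{{\cal A}_I}(L)$. The conditions $\theta_z \land \theta_{\neg z} = \Delta$ and $\theta_z \lor \theta_{\neg z} = \nabla$ are computed as intersection and as $\nabla$ exactly as in the orthomodular reduct (the $OML$-join is already $\nabla$, hence so is the ${\cal A}_I$-join), while permutability is automatic because ${\cal A}_I$ is arithmetical, hence congruence-permutable. Thus $(\theta_z, \theta_{\neg z})$ is a pair of factor congruences and $L \cong L_z \times L_{\neg z}$.

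For item 3, the map $z \mapsto \theta_z$ restricted from $Z(L)$ to $s(L)$ is an injective Boolean homomorphism, and by item 2 its image consists of factor congruences of $Con_{{\cal A}_I}(L)$. The crux — and the step I expect to be the main obstacle — is the reverse inclusion: every factor congruence $\theta$ of $Con_{{\cal A}_I}(L)$ equals $\theta_z$ for some $z \in s(L)$, not merely $z \in Z(L)$. Since $\theta$ and its complement $\theta^*$ permute with $\theta \cap \theta^* = \Delta$ and $\theta \circ \theta^* = \nabla$, they are also $OML$-factor congruences, so $\theta = \theta_z$ with $z \in Z(L)$, and uniqueness of Boolean complements forces $\theta^* = \theta_{\neg z}$. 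Now I exploit that both $\theta_z$ and $\theta_{\neg z}$ are $s$-compatible: from $(z,1) \in \theta_z$ one gets $(s(z), 1) \in \theta_z$, i.e. $s(z) \land z = z$, whence $z \leq s(z)$; applying the same argument to $\neg z$ gives $\neg z \leq s(\neg z) = \neg s(z)$, i.e. $s(z) \leq z$. Hence $z = s(z) \in s(L)$ (Proposition \ref{E1}-5), which completes the characterization and shows the map is a Boolean isomorphism onto the factor congruences of $Con_{{\cal A}_I}(L)$.
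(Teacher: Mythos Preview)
Your proof is correct. Item 1 is essentially identical to the paper's argument. For item 2 the paper takes a more concrete route: rather than invoking arithmeticity and checking the factor-congruence axioms abstractly, it directly verifies that $g(x)=(x\land z,\,x\land\neg z)$ is an ${\cal A}_I$-isomorphism onto $L_z\times L_{\neg z}$ by computing $g(s(x))=s(g(x))$; your route is a bit cleaner but leans on the arithmeticity of ${\cal A}_I$ stated earlier in the paper. The genuine divergence is in item 3. The paper argues via the product isomorphism $h:L\to L/\theta\times L/\theta^*$: setting $z=h^{-1}((1,0))$ one has $z\in Z(L)$ and $\theta=\theta_z$ from the $OML$ theory, and since $h$ commutes with $s$ while $s((1,0))=(1,0)$, immediately $s(z)=h^{-1}(s((1,0)))=h^{-1}((1,0))=z$. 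You instead recover $z\in Z(L)$ with $\theta=\theta_z$, $\theta^*=\theta_{\neg z}$ from the $OML$ side and then squeeze $z\leq s(z)\leq z$ out of the $s$-compatibility of the two congruences separately, via $(z,1)\in\theta_z$ and $(\neg z,1)\in\theta_{\neg z}$. Both arguments are short; the paper's is a one-liner once the product isomorphism is in hand, while yours avoids the explicit product description and is more transparently congruence-theoretic.
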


\begin{proof}
1) Let $z\in s(L)$. We first prove that $\theta_z \in Con_{{\cal
A}_I}(L)$. It is well known that $\theta_z$ is an $OML$-congruence.
We only need to see the $s$-compatibility. Suppose that $(a,b) \in
\theta_z$ i.e., $a\land z = b\land z$. By Proposition \ref{E1}-4 and
5, $s(a)\land z = s(a)\land s(z) = s(a\land s(z)) = s(a\land z) =
s(b\land z) = s(b\land s(z)) = s(b) \land s(z) = s(b) \land z$.
Hence $(s(a),s(b)) \in \theta_z$. By Proposition \ref{EC3}, $L_z \in
{\cal A}$. Let $f:L/_{\theta_z} \rightarrow L_z$ such that
$f(x/{\theta}_z) = x\land z$. Since $f$ is an $OML$-isomorphism, we
have to prove that $f(s(x/{\theta}_z)) = s(f(x/{\theta}_z))$. In
fact $f(s(x/{\theta}_z)) = f(s(x)/{\theta}_z) = s(x) \land z = s(x)
\land s(z) = s(x\land s(z)) = s(f(x/{\theta}_z))$. Hence $f$ is a
${\cal A}_I$-isomorphism.

2) By item 1, $x/{\theta}_{\neg z} \mapsto x\land \neg z$ defines a
${\cal A}$-isomorphism from $L/{\theta}_{\neg z}$ onto $L_{\neg z}$.
Thus we have to prove that $g:L \rightarrow L_z \times L_{\neg z}$
such that $g(x) = (x \land z, x \land \neg z )$ is an ${\cal
A}_I$-isomorphism. It is well known that $g$ is an $OML$-isomorphism,
consequently we need to prove that $g(s(x)) = s(g(x))$. In fact
$g(s(x))= (s(x) \land z, s(x) \land \neg z ) = (s(x) \land s(z),
s(x) \land s(\neg z) ) = (s(x\land z), s(x\land \neg z)) = s((x\land
z, x \land \neg z)) = s(g(x))$. Hence $g$ is an ${\cal
A}_I$-isomorphism and  $(\theta_z, \theta_{\neg z})$ is a pair of
factor congruences on $L$.

3) Let $\theta$ be a factor congruence and $h:L \rightarrow L/\theta
\times L/\theta^*$ be an ${\cal A}_I$-isomorphism. Since $h$ is an
$OML$-isomorphism, if we consider the preimage $z= f^{-1}((1,0))$
then, it is well known that $z\in Z(L)$ and $\theta = \{(a,b) \in
L^2: a\land z = b\land z \}$. Taking into account that $s((1,0)) =
(1,0)$ we have that: $z = f^{-1}(s((1,0))) = s(f^{-1}(1,0)) = s(z)$.
Hence $z\in s(L)$ and $s(L) \ni z \mapsto {\theta}_z = \{(a,b) \in
L^2: a\land z = b\land z\}$ is a Boolean isomorphism between $s(L)$
and the Boolean subalgebra of $Con_{{\cal A}_I}(L)$ of factor
congruences.

\qed
\end{proof}

If ${\cal A}_I$ is a subvariety of ${\cal IE}_B$ we denote by ${\cal
D}({\cal A}_I)$ the class of directly indecomposable algebras of
${\cal A}$.

\begin{prop}\label{PROD2}
Let ${\cal A}_I$ be a subvariety of ${\cal IE}_B$. Then we have

\begin{enumerate}
\item
$L \in {\cal D}({\cal A}_I)$ iff $s(L) = {\bf 2}$.

\item
If $L \in {\cal D}({\cal A}_I)$ then the function $ \sigma_s(x) = \cases {1, &
if $s(x)=1^L$ \cr 0 , & if $s(x)=0^L$ \cr} $ is the unique Boolean pre-state coherent with $s$.

\item
Let $L \in {\cal D}({\cal A}_I)$ and $x,y \in L$ such that, $x\bot
y$ and $s(x\lor y) = s(x)\lor s(y)$. Then $\sigma_s(x\lor y) =
\sigma_s(x) + \sigma_s(y)$.

\end{enumerate}
\end{prop}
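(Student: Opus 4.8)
The plan is to treat the three items in turn, with the first providing the structural backbone on which the other two rest. For item~1 I would read off the result directly from Proposition~\ref{PROD}-3: the map $z \mapsto \theta_z$ is a Boolean isomorphism between $s(L)$ and the Boolean subalgebra of factor congruences of $L$. Since $L$ is directly indecomposable exactly when $\Delta$ and $\nabla$ are its only factor congruences, i.e. when the Boolean algebra of factor congruences has precisely two elements, the isomorphism forces $L \in {\cal D}({\cal A}_I)$ iff $s(L)$ has exactly two elements, that is $s(L) = {\bf 2}$. No genuine obstacle arises here; it is a transcription of Proposition~\ref{PROD}-3 through the standard characterization of direct indecomposability recalled in Section~1.

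The consequence I would carry into items~2 and~3 is that for $L \in {\cal D}({\cal A}_I)$ we have $s(x) \in \{0^L, 1^L\}$ for every $x$, so the two clauses defining $\sigma_s$ are exhaustive and mutually exclusive and $\sigma_s$ is well defined. For item~2 I would first check that $\sigma_s$ is a Boolean pre-state: axiom s2 gives $s(\neg x) = \neg s(x)$, and in ${\bf 2}$ this yields $\sigma_s(\neg x) = 1 - \sigma_s(x)$, while monotonicity is immediate from Proposition~\ref{E1}-2. Coherence, namely $\sigma_s(x) = 1$ iff $\sigma_s(s(x)) = 1$, follows from $s(s(x)) = s(x)$ (Proposition~\ref{E1}-4). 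For uniqueness, let $\sigma$ be any Boolean pre-state coherent with $s$; using Proposition~\ref{ORTHO}-1 we have $\sigma(1^L) = 1$ and $\sigma(0^L) = 0$, so coherence gives $\sigma(x) = 1$ iff $\sigma(s(x)) = 1$ iff $s(x) = 1^L$, which is precisely the condition defining $\sigma_s$. Hence $\sigma = \sigma_s$.

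For item~3 the point I would emphasize is that orthogonality $x \bot y$ forces $s(x) \land s(y) = 0^L$: from $x \leq \neg y$ together with Proposition~\ref{E1}-2 and s2 one obtains $s(x) \leq s(\neg y) = \neg s(y)$, whence $s(x) \land s(y) = 0^L$ in ${\bf 2}$. This rules out the case $s(x) = s(y) = 1^L$, so that $\sigma_s(x) + \sigma_s(y) \in \{0,1\}$. Using the hypothesis $s(x \lor y) = s(x) \lor s(y)$, a short case analysis on the values of $s(x)$ and $s(y)$ in ${\bf 2}$ shows that $\sigma_s(x \lor y) = 1$ iff $s(x) = 1^L$ or $s(y) = 1^L$, which coincides with $\sigma_s(x) + \sigma_s(y)$ in every case. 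The only place requiring care — and thus the main, if mild, obstacle — is ensuring that the additivity never overflows beyond $1$, which is exactly what the orthogonality-derived identity $s(x) \land s(y) = 0^L$ guarantees.
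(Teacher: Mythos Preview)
Your argument is correct in all three parts. Items~1 and~3 match the paper's approach essentially verbatim: item~1 is a direct reading of Proposition~\ref{PROD}-3, and item~3 is a case analysis on the values of $s(x), s(y)$ in $\{0^L,1^L\}$, with orthogonality used to exclude the case $s(x)=s(y)=1^L$ (the paper derives this within the case $s(x)=1^L$ rather than stating it up front, but the content is the same).

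The genuine difference lies in item~2. The paper does not verify the Boolean pre-state axioms for $\sigma_s$ by hand; instead it invokes the correspondence machinery developed earlier: since $s(L)=\mathbf{2}$, Proposition~\ref{STATE0} gives $\{1\}$ as the unique prime increasing subset of $s(L)$, and then Propositions~\ref{EXT0} and~\ref{EXT1} translate this into the existence and uniqueness of a Boolean pre-state coherent with $s$. Your route bypasses that machinery entirely, checking the two pre-state conditions and coherence directly from s2, Proposition~\ref{E1}-2,4, and then obtaining uniqueness from the fact that any coherent $\sigma$ is determined on $\{0^L,1^L\}$ by Proposition~\ref{ORTHO}-1. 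The paper's version showcases the bijections of Section~4 and keeps the proof to one line; your version is self-contained and would survive even without those propositions. Both are clean, and neither is longer than the other in substance.
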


\begin{proof}
1) Follows immediately from Theorem \ref{PROD}. \hspace{0.2cm} 2)
Since $s(L) = {\bf 2}$, by Proposition \ref{STATE0}, $\{1\}$ is the
unique prime increasing set in $s(L)$. Hence by Proposition
\ref{EXT0} and Proposition \ref{EXT1}, $\sigma_s$ is the unique
Boolean pre-state coherent with $s$. \hspace{0.2cm} 3) Let $x,y \in
L$ such that, $x\bot y$ and $s(x\lor y) = s(x)\lor s(y)$. Suppose
that $s(x) = 1^L$. Then $1^L = s(x) \leq s(x\lor y)$ and  $s(\neg y)
= 0^L$. Thus $\sigma_s(x\lor y) = 1$, $\sigma_s(x) = 0$ and
$\sigma_s(y) = 1$, i.e.,  $\sigma_s(x\lor y) = \sigma_s(x) +
\sigma_s(y)$. Suppose that $s(x) = 0^L$. Then $s(x\lor y) = 0 \lor
s(y) = s(y)$ and $\sigma_s(x\lor y) = \sigma_s(y)$. Since
$\sigma_s(x) = 0$, $\sigma_s(x\lor y) = 0 + \sigma_s(y) =
\sigma_s(x) + \sigma_s(y)$. Hence $\sigma_s(x\lor y) = \sigma_s(x) +
\sigma_s(y)$.

\qed
\end{proof}

Now we can establish a simple criterium to equationally characterize
subclasses of Boolean pre-states.

\begin{theo}\label{VARIETY2}
Let ${\cal A}$ be a subcategory of ${\cal E}_B$ and let $ {\cal
A}_I$ be a subvariety of ${\cal IE}_B$ such that it satisfies the
following two conditions:

\begin{enumerate}

\item[{\bf I:}]
For each $(L, \sigma) \in {\cal A}$, ${\cal I}(L) \in {\cal D}({\cal
A}_I)$  where the internal Boolean pre-state in ${\cal I}(L)$ is
given by $ s_{\sigma}(x) = \cases {1^L, & if $\sigma(x)=1$ \cr 0^L ,
& if $\sigma(x)=0$ \cr} $

\item[{\bf E:}]
For  each $L \in {\cal D}({\cal A}_I)$, $(L, \sigma_s ) \in {\cal
A}$ where $\sigma_s$, the unique Boolean pre-state coherent with
$s$, is given by $ \sigma_s(x) = \cases {1, & if $s(x)=1^L$ \cr 0 ,
& if $s(x)=0^L$ \cr} $

\end{enumerate}

\noindent Then  ${\cal I}: {\cal A} \rightarrow {\cal D}({\cal
A}_I)$ is a categorical equivalence and $ {\cal A}_I$ equationally
characterizes ${\cal A}$.

\end{theo}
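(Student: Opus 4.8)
The plan is to verify that the restriction of the functor ${\cal I}$ from Proposition \ref{FUNC0} to ${\cal A}$ is full, faithful and essentially surjective onto ${\cal D}({\cal A}_I)$, and then to read off the equational characterization by taking the distinguished subclass to be ${\cal D}({\cal A}_I)$ itself and invoking Birkhoff's subdirect representation theorem.

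First I would check that ${\cal I}$ really lands in ${\cal D}({\cal A}_I)$: this is exactly hypothesis {\bf I}, which guarantees ${\cal I}(L,\sigma) = (L, s_\sigma) \in {\cal D}({\cal A}_I)$ for every $(L,\sigma) \in {\cal A}$, while Proposition \ref{FUNC0}-3 ensures that ${\cal E}_B$-homomorphisms are sent to $IE_B$-homomorphisms. Faithfulness is immediate, since ${\cal I}$ acts as the identity on underlying maps, so ${\cal I}(f) = {\cal I}(g)$ forces $f = g$. For fullness, given an $IE_B$-homomorphism $g : (L_1, s_{\sigma_1}) \to (L_2, s_{\sigma_2})$ between two objects of the image, I would use that $g$ commutes with the internal pre-states and preserves $0,1$; combined with the defining equivalence $\sigma_i(x) = 1 \Leftrightarrow s_{\sigma_i}(x) = 1^{L_i}$, this yields $\sigma_2 \circ g = \sigma_1$, so $g$ is already an ${\cal E}_B$-homomorphism and ${\cal I}(g) = g$. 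Essential surjectivity is where hypothesis {\bf E} enters: for $L \in {\cal D}({\cal A}_I)$ we have $(L, \sigma_s) \in {\cal A}$, and the point to verify is that ${\cal I}(L, \sigma_s) = (L, s_{\sigma_s})$ recovers $L$ on the nose, i.e.\ $s_{\sigma_s} = s$. By Proposition \ref{PROD2}-1 a directly indecomposable algebra has $s(L) = {\bf 2}$, so $s(x) \in \{0^L, 1^L\}$ for every $x$, and then the chain $s_{\sigma_s}(x) = 1^L \Leftrightarrow \sigma_s(x) = 1 \Leftrightarrow s(x) = 1^L$ forces $s_{\sigma_s} = s$. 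Hence ${\cal I}: {\cal A} \to {\cal D}({\cal A}_I)$ is a categorical equivalence.

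For the equational characterization I would set the subclass ${\cal D}$ of the relevant definition equal to ${\cal D}({\cal A}_I)$. Condition (1) is precisely the equivalence just established. For condition (2), the implication $\models_{{\cal A}_I} t = 1 \Rightarrow \models_{{\cal D}({\cal A}_I)} t = 1$ is trivial, as ${\cal D}({\cal A}_I)$ is a subclass. The converse is the substantive direction and rests on Birkhoff's theorem: every algebra in ${\cal A}_I$ is a subdirect product of subdirectly irreducible algebras of ${\cal A}_I$, and every subdirectly irreducible algebra is directly indecomposable, hence lies in ${\cal D}({\cal A}_I)$. Since an equation of the form $t = 1$ is preserved under subalgebras and products, validity throughout ${\cal D}({\cal A}_I)$ propagates to all subdirectly irreducible members and therefore throughout ${\cal A}_I$. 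Recalling from Remark \ref{EQOML} that all equations in an orthomodular-reduct variety may be written as $t=1$, this yields $\models_{{\cal D}({\cal A}_I)} t = 1 \Rightarrow \models_{{\cal A}_I} t = 1$ for every term $t$, completing the argument.

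The full/faithful/essentially-surjective bookkeeping is largely routine, being prepared by Propositions \ref{FUNC0}, \ref{EXT0}, \ref{EXT1} and \ref{PROD2}. The genuinely load-bearing step is the identity $s_{\sigma_s} = s$ on directly indecomposable algebras, which depends crucially on $s(L) = {\bf 2}$; without the restriction to ${\cal D}({\cal A}_I)$ the two operations would in general differ. The deeper conceptual engine, however, is Birkhoff's subdirect representation theorem: it is what converts a statement about the small, explicitly controlled class of directly indecomposable algebras into a statement about the entire variety, and it is the reason the whole strategy of characterizing a variety through its indecomposables succeeds.
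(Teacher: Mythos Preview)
Your proof is correct and follows essentially the same approach as the paper. The only cosmetic difference is that the paper establishes the categorical equivalence by constructing an explicit quasi-inverse functor ${\cal E}$ and checking that ${\cal E}{\cal I}$ and ${\cal I}{\cal E}$ are naturally isomorphic to the identity functors, whereas you use the equivalent full/faithful/essentially-surjective criterion; both arguments rest on the same identities $\sigma_{s_\sigma}=\sigma$ and $s_{\sigma_s}=s$ (the latter via $s(L)={\bf 2}$ from Proposition~\ref{PROD2}) and on Birkhoff's subdirect representation theorem for the equational part.
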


\begin{proof}
By condition  {\bf E} we consider ${\cal E}: {\cal D}({\cal A}_I)
\rightarrow {\cal A}$ such that for each $L \in {\cal D}({\cal
A}_I)$ ${\cal E}(L) = (L, \sigma_s )$. If $f:L_1 \rightarrow L_2$ is
an ${\cal A}_I$-homomorphism, by definition of $\sigma_{s_i}$ with
$i = 1,2$, ${\cal E}(f) = f$ is  an ${\cal A}$-homomorphism. Thus
${\cal E}$ is a functor. We prove that the composite functor  ${\cal
E}{\cal I}$ is naturally equivalent to the identity functor $1_{\cal
A}$. Let $(L, \sigma) \in {\cal A}$. By Proposition \ref{FUNC0} and
Proposition \ref{PROD}, $\sigma = \sigma_{s_{\sigma}}$. Consequently
${\cal E}{\cal I}(L,\sigma) = (L,\sigma)$ and ${\cal E}{\cal
I}(f)=f$ for each ${\cal A}$-homomorphisms. Then the following
diagram is trivially commutative:

\begin{center}
\unitlength=1mm
\begin{picture}(20,20)(0,0)
\put(8,16){\vector(3,0){5}} \put(-4,10){\vector(0,-2){5}} \put(8,0){\vector(1,0){5}}
\put(24,10){\vector(0,-2){5}}

\put(-4,16){\makebox(0,0){$(L_1, \sigma_1)$}} \put(24,16){\makebox(0,0){$(L_2,
\sigma_2)$}} \put(-6,0){\makebox(0,0){${\cal E}{\cal I}(L_1, \sigma_1)$}}
\put(27,0){\makebox(0,0){${\cal E}{\cal I}(L_2, \sigma_2)$}}

\put(2,20){\makebox(15,0){$f$}} \put(-9,8){\makebox(-5,0){$1_{L_1}$}}
\put(14,-5){\makebox(-5,2){${\cal SI}(f)$}} \put(33,8){\makebox(-1,2){$1_{L_2}$}}
\end{picture}
\end{center}

\vspace{0.2cm} \noindent It proves that ${\cal E}{\cal I}$ is
naturally equivalent to the identity functor $1_{\cal A}$. With
analogous arguments we can prove that ${\cal I}{\cal E}$ is
naturally equivalent to the identity functor $1_{{\cal D}({\cal
A}_I})$. Hence ${\cal I}: {\cal A} \rightarrow {\cal D}({\cal A}_I)$
is a categorial equivalence.

Since ${\cal D}({\cal A}_I)$ contain the subdirectly irreducible algebras of ${\cal
A}_I$, it is immediate that  $\models_{{\cal A}_I} t = 1$  iff $\models_{{\cal
D}({\cal A}_I)} t=1$. Hence $ {\cal A}_I$ equationally characterizes ${\cal A}$.

\qed
\end{proof}

\begin{rema}
{\rm Let ${\cal A}$ be a subcategory of ${\cal E}_B$. Theorem
\ref{VARIETY2} states that every object $(A,\sigma) \in {\cal A}$
where $\sigma$ is a two-valued state defined on the orthomodular
lattice $A$ is univocally identifiable to a directly indecomposable
algebra of the variety ${\cal A}_I$ and viceversa. In other words,
if a class ${\cal A}$ of two valued states defined over orthomodular
lattices is equationally characterizable through a  variety ${\cal
A}_I$ then ${\cal A}$ is identifiable to the class of directly
indecomposable algebras of ${\cal A}_I$. }

\end{rema}

\begin{example} {\it Boolean pre-states}.
{\rm Let us apply Theorem \ref{VARIETY2} to show that ${\cal IE}_B$
equationally characterizes ${\cal E}_B$. {\bf I}) By Proposition
\ref{FUNC0}, if $(L,\sigma) \in {\cal E}_B$ then ${\cal I}(L) \in
{\cal IE}_B$. {\bf E}) If $L \in {\cal D}({\cal IE}_B)$, by
Proposition \ref{PROD}, $(L,\sigma_s) \in {\cal E}_B$. Hence ${\cal
IE}_B$ equationally characterizes  the full class ${\cal E}_B$. }
\end{example}

In the next sections we use Theorem \ref{VARIETY2}  to characterize
two different families of two-valued states.

\section{Two-valued states}\label{tvs}

Now we study the class of two-valued states of Definition 3.1. We denote by  ${\cal TE}_B$ the full subcategory of ${\cal E}_B$
whose objects are pairs $(L, \sigma)$ such that $L$ is an orthomodular lattice and $\sigma$ is a two-valued state. We propose the
following structure to characterize ${\cal TE}_B$.

\begin{definition}
{\rm An {\it orthomodular lattice with an internal two-valued state}
($ITE_B$-lattice for short) is a $IE_B$-lattice $ \langle L, \land,
\lor, \neg, s, 0, 1 \rangle$  that satisfies: $$s(x \lor (y
\land \neg x) )= s(x) \lor s(y \land \neg x)$$ }
\end{definition}

\noindent We denote by ${\cal ITE}_B$ the variety of
$ITE_B$-lattices.

\begin{prop}\label{ITE1}
Let $L$ be a $ITE_B$-lattice and $x,y$ in $L$ such that $x\bot y$. Then $s(x \lor y) =s(x) \lor s(y)$.
\end{prop}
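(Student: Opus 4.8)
The plan is to exploit the defining equation of an $ITE_B$-lattice, namely $s(x \lor (y \land \neg x)) = s(x) \lor s(y \land \neg x)$, by specializing it to orthogonal pairs. The whole argument reduces to observing that orthogonality collapses the term $y \land \neg x$ into $y$, so that the general equation becomes exactly the desired additivity statement $s(x \lor y) = s(x) \lor s(y)$.

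First I would reformulate the orthogonality hypothesis order-theoretically. By definition $x \bot y$ means $x \leq \neg y$. Applying the involution $\neg$, which reverses the order, together with $\neg \neg y = y$, gives the equivalent inequality $y \leq \neg x$. This step uses only the lattice-with-involution axioms recalled in Section 1 (that $\neg$ is an order-reversing involution), so it is routine.

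Next I would use $y \leq \neg x$ to simplify the meet: since $y$ lies below $\neg x$, we have $y \land \neg x = y$. Substituting this identity into both sides of the $ITE_B$ defining equation, the left-hand side $s(x \lor (y \land \neg x))$ becomes $s(x \lor y)$, while the right-hand side $s(x) \lor s(y \land \neg x)$ becomes $s(x) \lor s(y)$. Equating the two yields $s(x \lor y) = s(x) \lor s(y)$, which is exactly the claim.

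There is no real obstacle here beyond correctly identifying the substitution: the content of the proof is entirely in recognizing that the $ITE_B$ axiom, which is phrased for a general pair $x,y$ via the ``orthogonalized'' term $y \land \neg x$, is precisely engineered so that restricting to genuinely orthogonal $x,y$ (where $y \land \neg x = y$) recovers the additivity defining a two-valued state on an orthogonal pair. Thus the proof is a one-line specialization once the order-reformulation of $\bot$ is in place.
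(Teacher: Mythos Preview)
Your proof is correct and follows essentially the same route as the paper's own argument: both derive $y \leq \neg x$ from $x \bot y$, use it to rewrite $y \land \neg x$ as $y$, and then specialize the defining $ITE_B$-equation $s(x \lor (y \land \neg x)) = s(x) \lor s(y \land \neg x)$ to obtain the claim.
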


\begin{proof} Suppose that $x\leq \neg y$ and then $y \leq \neg x$. Hence, by definition of $ITE_B$-lattice,
$s(x\lor y) = s(x \lor (y \land \neg x) )= s(x) \lor s(y \land \neg
x)= s(x) \lor s(y)$. \qed
\end{proof}

\begin{theo}\label{ITE2}
${\cal ITE}_B$ equationally characterizes  ${\cal TE}_B$.
\end{theo}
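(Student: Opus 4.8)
The plan is to apply the criterion of Theorem \ref{VARIETY2} with ${\cal A} = {\cal TE}_B$ and ${\cal A}_I = {\cal ITE}_B$. Since ${\cal ITE}_B$ is by definition the subclass of ${\cal IE}_B$ cut out by the single extra identity $s(x \lor (y \land \neg x)) = s(x) \lor s(y \land \neg x)$, it is a subvariety of ${\cal IE}_B$, so the hypotheses on the ambient classes are met and it only remains to verify conditions \textbf{I} and \textbf{E}.

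For condition \textbf{I}, I would start from an object $(L, \sigma) \in {\cal TE}_B$ and show that ${\cal I}(L) = (L, s_\sigma)$ is a directly indecomposable $ITE_B$-lattice. The key elementary observation is that for arbitrary $x, y \in L$ the elements $x$ and $y \land \neg x$ are orthogonal: from $y \land \neg x \leq \neg x$ and the order-reversing involution one gets $x \leq \neg(y \land \neg x)$, i.e. $x \bot (y \land \neg x)$. Hence, because $\sigma$ is a two-valued state, $\sigma(x \lor (y \land \neg x)) = \sigma(x) + \sigma(y \land \neg x)$, and Proposition \ref{FUNC0}-2 (applied to the orthogonal pair $x$, $y \land \neg x$) then yields $s_\sigma(x \lor (y \land \neg x)) = s_\sigma(x) \lor s_\sigma(y \land \neg x)$, which is exactly the defining identity of ${\cal ITE}_B$. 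Thus ${\cal I}(L) \in {\cal ITE}_B$. Direct indecomposability follows from Proposition \ref{PROD2}-1: since $s_\sigma$ takes only the values $0^L, 1^L$ and both are attained (as $\sigma(0) = 0$ and $\sigma(1) = 1$ by Proposition \ref{ORTHO}-1), we have $s_\sigma(L) = {\bf 2}$, so ${\cal I}(L) \in {\cal D}({\cal ITE}_B)$.

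For condition \textbf{E}, I would take $L \in {\cal D}({\cal ITE}_B)$ and verify that the canonical map $\sigma_s$ --- which by Proposition \ref{PROD2}-2 exists and is the unique Boolean pre-state coherent with $s$ --- is a two-valued state. The normalization $\sigma_s(1) = 1$ is immediate from $s(1) = 1$. For additivity on orthogonal pairs, suppose $x \bot y$; since $L$ is an $ITE_B$-lattice, Proposition \ref{ITE1} gives $s(x \lor y) = s(x) \lor s(y)$, and then Proposition \ref{PROD2}-3 delivers $\sigma_s(x \lor y) = \sigma_s(x) + \sigma_s(y)$. Hence $(L, \sigma_s) \in {\cal TE}_B$, establishing \textbf{E}.

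With both conditions in hand, Theorem \ref{VARIETY2} immediately gives that ${\cal I}: {\cal TE}_B \rightarrow {\cal D}({\cal ITE}_B)$ is a categorical equivalence and that ${\cal ITE}_B$ equationally characterizes ${\cal TE}_B$. The only genuinely delicate point is the first step of condition \textbf{I}: recognizing that the equation chosen to define ${\cal ITE}_B$, phrased in terms of $y \land \neg x$ rather than of two formally orthogonal variables, is precisely the substitution instance that converts the orthogonality-indexed additivity of a two-valued state into an unconditional lattice identity. Once that translation is seen, every remaining step reduces to a direct citation of the preparatory results (\ref{FUNC0}, \ref{ITE1}, \ref{PROD2}).
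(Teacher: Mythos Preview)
Your proposal is correct and follows essentially the same route as the paper's own proof: verify conditions {\bf I} and {\bf E} of Theorem \ref{VARIETY2}, using the orthogonality $x \bot (y \land \neg x)$ together with Proposition \ref{FUNC0}-2 for {\bf I}, and Proposition \ref{ITE1} together with Proposition \ref{PROD2}-3 for {\bf E}. If anything, you are slightly more explicit than the paper in spelling out why $s_\sigma(L) = {\bf 2}$ and why $\sigma_s(1)=1$.
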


\begin{proof} We need to prove the two conditions of Theorem \ref{VARIETY2}. {\bf I}) Let
$(L,\sigma) \in {\cal TE}_B$. We first show that $s_{\sigma}(x \lor
(y \land \neg x) )= s_{\sigma}(x) \lor s_{\sigma}(y \land \neg x)$.
Since $x\bot y \land \neg x $, $\sigma (x \lor (y \land \neg x) ) =
\sigma(x) + \sigma (y \land \neg x)$. Then, by Proposition
\ref{FUNC0}-2, $s_\sigma (x \lor (y \land \neg x) ) = s_\sigma(x)
\lor s_\sigma (y \land \neg x)$. Hence by Proposition \ref{PROD2}-1,
${\cal I}(L)=(L, s_\sigma) \in {\cal D}({\cal ITE}_B)$.
\hspace{0.2cm} {\bf E}) Let $L\in {\cal D}({\cal IWE}_B)$ and  $x,y
\in L$ such that $x \leq \neg y$. By Lemma \ref{ITE1}, $s(x \lor y)
=s(x) \lor s(y)$. Then by Proposition \ref{PROD2}-3, $\sigma_s(x \lor
y) = \sigma_s(x) + \sigma_s(y)$ and $(L,\sigma_s) \in {\cal TE}_B$.
Hence ${\cal ITE}_B$  equationally characterizes ${\cal TE}_B$.

\qed
\end{proof}

\section{Jauch-Piron two-valued states}

Let $L$ be an orthomodular lattice. A {\it Jauch-Piron two-valued
state} is a two-valued state $\sigma$ that satisfies $$\sigma(x) =
\sigma(y) = 1 \hspace{0.3cm} \Longrightarrow \hspace{0.3cm} \exists
c\in L:\sigma(c)= 1 \hspace{0.2cm} and \hspace{0.2cm} c\leq x,y $$
For the analysis of this property imposed by Jauch and Piron
\cite{JAU1, PIR2} we also refer to \cite{RU}. We denote by  ${\cal JPE}_B$ the full subcategory of
${\cal E}_B$ whose objects are pairs $(L, \sigma)$ such that $L$ is
an orthomodular lattice and $\sigma$ is a Jauch-Piron two-valued
state.

\begin{prop}\label{JP1}
Let $L$ be an orthomodular lattice and $\sigma$ be a two-valued
state. Then the following statements are equivalent:

\begin{enumerate}
\item
$\sigma$ is a Jauch-Piron two-valued  state.

\item
$\sigma(x) = \sigma(y) = 1 \hspace{0.3cm} \Longrightarrow \hspace{0.3cm} \sigma(x\land y) = 1$,

\item
$\sigma(x) \cdot \sigma(\neg x \lor y) = \sigma(x\land y)$.

\end{enumerate}
\end{prop}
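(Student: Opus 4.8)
The plan is to establish the three equivalences in a cycle: $(1)\Rightarrow(2)\Rightarrow(3)\Rightarrow(1)$, exploiting the fact that $\sigma$ takes only the values $0$ and $1$ so that all arithmetic collapses to Boolean reasoning. Throughout I will freely use the elementary facts already available from Proposition \ref{ORTHO}, namely that a two-valued state is monotone, that $\sigma(\neg x)=1-\sigma(x)$, and that $\sigma(x\land y)\leq\min\{\sigma(x),\sigma(y)\}$.

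First I would prove $(1)\Rightarrow(2)$. Assume $\sigma$ is Jauch-Piron and $\sigma(x)=\sigma(y)=1$. By hypothesis there is $c\in L$ with $\sigma(c)=1$ and $c\leq x,y$. Then $c\leq x\land y$, and monotonicity gives $1=\sigma(c)\leq\sigma(x\land y)$, so $\sigma(x\land y)=1$. For $(2)\Rightarrow(1)$, which I would record since it is needed to close the loop, the witness $c$ is simply $x\land y$ itself: if $\sigma(x)=\sigma(y)=1$ then $\sigma(x\land y)=1$ by (2), and $x\land y\leq x,y$, so (1) holds. This already shows $(1)\Leftrightarrow(2)$ almost for free; the content of the proposition is really the passage to the single equation (3).

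Next I would prove $(2)\Rightarrow(3)$ by case analysis on the value of $\sigma(x)$. If $\sigma(x)=0$, the left-hand side of (3) is $0$, and since $\sigma(x\land y)\leq\sigma(x)=0$ the right-hand side is also $0$. If $\sigma(x)=1$, the left-hand side equals $\sigma(\neg x\lor y)$. Here I would argue that $\sigma(\neg x\lor y)=\sigma(x\land y)$ when $\sigma(x)=1$: monotonicity gives $\sigma(x\land y)\leq\sigma(\neg x\lor y)$ for the easy direction, and for the reverse I would use the orthomodular identity together with (2). Concretely, from $\sigma(x)=1$ and $\sigma(\neg x\lor y)=1$ (the only nontrivial subcase) property (2) yields $\sigma(x\land(\neg x\lor y))=1$; by orthomodularity $x\land(\neg x\lor y)=x\land y$ in the relevant covering situation, whence $\sigma(x\land y)=1$. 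I expect this identification to be the main obstacle, since it is where the orthomodular law (axiom 3 of an orthomodular lattice) must be invoked rather than mere lattice monotonicity, and the reduction requires being careful that $y\leq x\lor y$ places us in an orthomodular pair.

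Finally I would prove $(3)\Rightarrow(2)$. Suppose (3) holds and $\sigma(x)=\sigma(y)=1$. Since $\sigma(x)=1$ we have $\sigma(\neg x)=0$, so $\sigma(\neg x\lor y)=\sigma(y)=1$ by monotonicity (as $y\leq\neg x\lor y$). Substituting into (3) gives $\sigma(x\land y)=\sigma(x)\cdot\sigma(\neg x\lor y)=1\cdot 1=1$, which is exactly (2). Combining $(1)\Leftrightarrow(2)$ with $(2)\Rightarrow(3)$ and $(3)\Rightarrow(2)$ closes the three statements into a single equivalence class, completing the proof.
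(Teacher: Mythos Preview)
Your argument for $(2)\Rightarrow(3)$ contains a genuine gap. In the subcase $\sigma(x)=1$, $\sigma(\neg x\lor y)=1$, you apply (2) to obtain $\sigma(x\land(\neg x\lor y))=1$ and then assert that ``by orthomodularity $x\land(\neg x\lor y)=x\land y$ in the relevant covering situation''. This identity is \emph{false} in a general orthomodular lattice: $x\land(\neg x\lor y)$ is the Sasaki projection of $y$ onto $x$, and it coincides with $x\land y$ exactly when $x$ and $y$ commute. For a concrete counterexample take $L=MO_2$ with atoms $a,\neg a,b,\neg b$; then $a\land(\neg a\lor b)=a\land 1=a$ while $a\land b=0$. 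The orthomodular law only yields $\neg x\lor(x\land(\neg x\lor y))=\neg x\lor y$ (since $\neg x\leq\neg x\lor y$), which is of no help here, and the vague appeal to ``$y\leq x\lor y$ places us in an orthomodular pair'' does not produce the identity you need.

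The paper sidesteps this trap by splitting on $\sigma(y)$ rather than on $\sigma(\neg x\lor y)$. When $\sigma(x)=1$ and $\sigma(y)=0$, one has $\sigma(\neg y)=1$, so (2) applied to $x$ and $\neg y$ gives $\sigma(x\land\neg y)=1$, whence $\sigma(\neg x\lor y)=1-\sigma(x\land\neg y)=0$, matching $\sigma(x\land y)\leq\sigma(y)=0$. When $\sigma(x)=\sigma(y)=1$, (2) gives $\sigma(x\land y)=1$ directly and monotonicity gives $\sigma(\neg x\lor y)\geq\sigma(y)=1$. Your own subcase could be salvaged by first deducing $\sigma(y)=1$ via the contrapositive of this computation, but not by the lattice identity you invoke. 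The remaining steps $(1)\Leftrightarrow(2)$ and $(3)\Rightarrow(2)$ in your outline are correct; your $(3)\Rightarrow(2)$ is in fact tidier than the paper's $(3)\Rightarrow(1)$.
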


\begin{proof} $1 \rightarrow 2$) Suppose that $\sigma(x)
= \sigma(y) = 1$. By hypothesis there exists $c \leq x,y$ such that
$\sigma(c) = 1$. Since $c\leq x\land y$, $\sigma(x\land y) = 1$.
\hspace{0.2cm}  $2 \rightarrow 3$) We have to consider four possible
cases:

{\it Case $\sigma(x) = \sigma(y) = 1$}. By hypothesis,
$\sigma(x\land y) = 1$. Since $y \leq \neg x \lor y$ we have $1 =
\sigma(y) \leq \sigma(\neg x \lor y)$. Thus $\sigma(x) \cdot
\sigma(\neg x \lor y) = \sigma(x\land y)$.

{\it Case $\sigma(x) = 1$ and  $\sigma(y) = 0$}. Since $x \land y
\leq y$ then $\sigma(x \land y) \leq \sigma(y) = 0$. Note that $1 -
\sigma(\neg x \lor y) = \sigma(x\land \neg y)$. Since $\sigma(x) =
\sigma(\neg y) = 1$, by hypothesis we have that $\sigma(x\land \neg
y) = 1$ and then $\sigma(\neg x \lor y) = 0$. Thus $\sigma(x) \cdot
\sigma(\neg x \lor y) = \sigma(x\land y)$. The cases with $\sigma(x)
= 0$ are trivial. Hence $\sigma(x) \cdot \sigma(\neg x \lor y) =
\sigma(x\land y)$.

$3 \rightarrow 1$) We first prove that $1-\sigma(x) \cdot
\sigma(\neg x \lor y) = \sigma(\neg x) \lor \sigma(x\land \neg y)$
where $\lor$ is the supremum in the natural order of $\{0,1\}$. If
$\sigma(x)=0$ then $1-\sigma(x) \cdot \sigma(\neg x \lor y) = 1$ and
$\sigma(\neg x) \lor \sigma(x\land \neg y) = 1 \lor \sigma(x\land
\neg y) = 1$. If $\sigma(x)=1$, $1-\sigma(x) \cdot \sigma(\neg x
\lor y) = 1 - \sigma(\neg x \lor y)$ and $\sigma(\neg x) \lor
\sigma(x\land \neg y) = 0 \lor \sigma(x\land \neg y)$. Since $1 -
\sigma(\neg x \lor y) = \sigma(\neg (\neg x \lor y)) = \sigma(x\land
\neg y)$ we have that $1-\sigma(x) \cdot \sigma(\neg x \lor y) =
\sigma(\neg x) \lor \sigma(x\land \neg y)$.

Suppose that $\sigma(x) = \sigma(y) = 1$. Note that $\sigma(\neg x)
= 0$ and $\sigma(x\land \neg y) \leq \sigma(\neg y) = 0$. Thus
$\sigma(\neg x) \lor \sigma(x\land \neg y) = 0$ and by the above
argument $\sigma(x) \cdot \sigma(\neg x \lor y)= 1$. By hypothesis
$\sigma(x\land y) = 1$. Since $x \land y \leq x,y$, $\sigma$ is a
 Jauch-Piron two-valued state.

\qed
\end{proof}

Taking into account the last proposition, we propose the following
structure to characterize ${\cal JPE}_B$.

\begin{definition}
{\rm An {\it orthomodular lattice with an internal Jauch-Piron two-valued state} ($IJPE_B$-lattice for short) is a $TE_B$-lattice
$ \langle L, \land, \lor, \neg, s, 0, 1 \rangle$ such that
satisfies: $$s(x) \land s(\neg x \lor y) = s(x\land y)$$ }
\end{definition}

\noindent We denote by ${\cal IJPE}_B$ the variety of
$IJPE_B$-lattices.

\begin{theo}\label{JP2}
${\cal IJPE}_B$  equationally characterizes ${\cal JPE}_B$.
\end{theo}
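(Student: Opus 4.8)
The plan is to apply Theorem~\ref{VARIETY2}, so I must verify its two conditions \textbf{I} and \textbf{E} for the subcategory ${\cal JPE}_B$ and the subvariety ${\cal IJPE}_B$. The whole proof reduces to translating the state-level equivalence of Proposition~\ref{JP1} (especially the identity $\sigma(x)\cdot\sigma(\neg x \lor y) = \sigma(x\land y)$) into the internal equation $s(x)\land s(\neg x \lor y) = s(x\land y)$ defining $IJPE_B$-lattices, and back again. Since ${\cal IJPE}_B$ is by definition a subvariety of ${\cal ITE}_B$, the two-valued-state part of both conditions is already handled by Theorem~\ref{ITE2}; I only need to account for the extra Jauch-Piron equation.

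For condition \textbf{I}, I would start with $(L,\sigma)\in{\cal JPE}_B$. By Theorem~\ref{ITE2} (condition \textbf{I} applied to ${\cal TE}_B$) we already know ${\cal I}(L)=(L,s_\sigma)$ is a directly indecomposable $ITE_B$-lattice, so $s_\sigma(L)={\bf 2}$ and it remains to check that $s_\sigma$ satisfies $s_\sigma(x)\land s_\sigma(\neg x \lor y)=s_\sigma(x\land y)$. The key point is that $s_\sigma$ takes values in $\{0^L,1^L\}$ and, by the definition of $s_\sigma$, this equation holds iff the corresponding numerical identity $\sigma(x)\cdot\sigma(\neg x \lor y)=\sigma(x\land y)$ holds in $\{0,1\}$. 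That numerical identity is exactly statement~3 of Proposition~\ref{JP1}, which is guaranteed because $\sigma$ is Jauch-Piron. Hence ${\cal I}(L)\in{\cal D}({\cal IJPE}_B)$.

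For condition \textbf{E}, I would take $L\in{\cal D}({\cal IJPE}_B)$; then $s(L)={\bf 2}$ and the coherent Boolean pre-state $\sigma_s$ is well-defined, with $(L,\sigma_s)\in{\cal TE}_B$ already secured by Theorem~\ref{ITE2}. To upgrade $\sigma_s$ to a Jauch-Piron state I would use Proposition~\ref{JP1}: it suffices to verify statement~3, i.e. $\sigma_s(x)\cdot\sigma_s(\neg x \lor y)=\sigma_s(x\land y)$. Since $s(L)={\bf 2}$, every value $s(u)$ is either $0^L$ or $1^L$, so the defining equation $s(x)\land s(\neg x \lor y)=s(x\land y)$ transfers directly to the numerical product identity for $\sigma_s$ via the correspondence $s(u)=1^L \Leftrightarrow \sigma_s(u)=1$. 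Then Proposition~\ref{JP1} ($3\Rightarrow 1$) gives that $\sigma_s$ is Jauch-Piron, so $(L,\sigma_s)\in{\cal JPE}_B$.

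The only delicate point, and the step I expect to require the most care, is the clean passage between the lattice meet $s(x)\land s(\neg x \lor y)$ and the arithmetic product $\sigma(x)\cdot\sigma(\neg x \lor y)$; this is legitimate precisely because in a directly indecomposable algebra $s(L)={\bf 2}$, where meet and multiplication of truth values coincide. Once both conditions \textbf{I} and \textbf{E} are established, Theorem~\ref{VARIETY2} immediately yields that ${\cal I}\colon{\cal JPE}_B\to{\cal D}({\cal IJPE}_B)$ is a categorical equivalence and that ${\cal IJPE}_B$ equationally characterizes ${\cal JPE}_B$, completing the proof.
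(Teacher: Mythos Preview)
Your proposal is correct and follows essentially the same route as the paper: both verify conditions \textbf{I} and \textbf{E} of Theorem~\ref{VARIETY2} by translating the Jauch--Piron identity of Proposition~\ref{JP1} into the defining $IJPE_B$-equation and back, relying on $s(L)=\{0^L,1^L\}$ in the directly indecomposable case. The only cosmetic difference is that the paper carries out condition \textbf{I} by an explicit case split on $\sigma(x)$ and handles condition \textbf{E} via item~2 of Proposition~\ref{JP1} rather than item~3, whereas you invoke the meet/product correspondence and item~3 uniformly.
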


\begin{proof} We need to prove the two condition of Theorem \ref{VARIETY2}.
{\bf I}) Let $(L,\sigma) \in {\cal JPE}_B$. We first show that
$s_\sigma(x) \land s_\sigma(\neg x \lor y) = s_\sigma(x\land y)$.
Suppose that $\sigma(x) = 1$. By Proposition \ref{JP1}-3,
$\sigma(\neg x \lor y) = \sigma(x\land y)$ and then $s_\sigma(\neg x
\lor y) = s_\sigma(x\land y)$. Thus $s_\sigma(x) \land s_\sigma(\neg
x \lor y) = s_\sigma(x\land y)$. Suppose that $\sigma(x)=0$. By
Proposition \ref{JP1}-3, $\sigma(x \land y) = 0$. Thus $s_\sigma(x)
= 0^L$, $s_\sigma(x\land y) = 0^L$ and $s_\sigma(x) \land
s_\sigma(\neg x \lor y) = 0^L \land s_\sigma(\neg x \lor y) = 0^L =
s_\sigma(x\land y)$. ${\cal I}(L)=(L, s_\sigma) \in {\cal D}({\cal
IJPE}_B)$. \hspace{0.2cm} {\bf E}) Let $L\in {\cal D}({\cal
IJPE}_B)$. Let $x,y \in L$ such that $\sigma_s(x) = \sigma_s(y) =
1$. Then $s(x) = s(y) = 1^L$. Note that $1^L = s(y) \leq s(\neg x
\lor y)$ and then $1^L = s(x) \land s(\neg x \lor y) = s(x\land y)$.
Thus $\sigma_s(x\land y) = 1$. By Proposition \ref{JP1},
$(L,\sigma_s) \in {\cal JPE}_B)$. Hence ${\cal IJPE}_B$ equationally
characterizes  ${\cal JPE}_B$.

\qed
\end{proof}

\section{Conclusions}

In this paper we have developed an algebraic framework in which it
is possible to demonstrate that several classes of two-valued states
over an orthomodular lattice may be equationally characterized. We
have obtained the internalization of a set of classes of two-valued
states by enlarging the orthomodular lattice with a unary operator
equationally described. This solves the question present in the
literature regarding the characterization of several families of
two-valued states over orthomodular lattices.

\section*{Acknowledgements}
The authors wish to thank an anonymous referee for corrections on an
earlier draft of this article, his careful reading and valuable
comments.

This work was partially supported by the following grants: PIP
112-200801-02543 and Projects of the Fund for Scientific Research
Flanders G.0362.03 and G.0452.04.

\end{document}